 \def\eg{e.g.\,}
\newcommand{\arity}{\underline{\mathrm{ar}}}
\newcommand{\st}{\mathrm{st}}
\newcommand{\op}{\mathrm{op}}
\def\HI{\mathcal{HI}}
\newcommand{\Nom}{\mathrm{Nom}}
\newcommand{\MOD}{\mathrm{Mod}^{\HI}}
\newcommand{\Mod}{\mathrm{Mod}}
\newcommand{\SEN}{\mathrm{Sen}^{\HI}}
\newcommand{\Sen}{\mathrm{Sen}}
\newcommand{\Sign}{\mathrm{Sign}}
\newcommand{\SIGN}{\mathrm{Sign}^{\HI}}
\newcommand{\MS}{\mathrm{MS}}
\newcommand{\REL}{\REL}
\newcommand{\Sig}{\mathrm{Sig}}
\def\ie{i.e.}
\def\Set{\mathbb{S}et}
\def\CAT{\mathbb{C}AT}
\def\PL{\mathit{PL}}
\def\REL{\mathit{REL}}
\def\MVL{\mathit{MVL}}
\def\EQ{\mathit{EQ}}
\def\HPL{\mathcal{H}\PL}
\def\H2PL{\mathcal{H}^2\PL}
\def\HMVL{\mathcal{H}\MVL}
\def\HEQ{\mathcal{H}\mathit{EQ}}
\newcommand{\w}{w}
\newcommand{\SenI}{\mathrm{Sen}^{\mathcal{I}}}
\newcommand{\SignI}{\mathrm{Sign}^{\mathcal{I}}}
\newcommand{\ModI}{\mathrm{Mod}^{\mathcal{I}}}
\newcommand{\bisim}{\rightleftharpoons_\varphi}
\newcommand{\bis}{\mathrm{B}_\varphi}
\newcommand{\simf}{\mathrm{R}_\varphi}
\newcommand{\feq}{\gg_\varphi}
\def\just#1#2{\\
         &#1& \rule{2em}{0pt} \footnotesize{ \{ \mbox{\rule[-.7em]{0pt}{1.8em} #2} \}} \\ && }
\def\implies{\mathbin{\Rightarrow}}
\def\rcb#1#2#3#4{\def\nothing{}\def\range{#3}\mathopen{\langle}#1 \ #2 \ \ifx\range\nothing::\else: \ #3 :\fi \ #4\mathclose{\rangle}}
\def\just#1#2{\\
         &#1& \rule{2em}{0pt} \footnotesize{ \{ \mbox{\rule[-.7em]{0pt}{1.8em} #2} \}} \\ && }
\newcommand{\I}{\mathcal{I}}
\newcommand{\co}{\,\colon\;}
\newcommand{\ra}{\rightarrow}
\newtheorem{definition}{Definition}[section]
\newtheorem{example}{Example}[section]
\newtheorem{theorem}{Theorem}[section]
\newtheorem{corollary}{Corollary}[section]
\newenvironment{proof}{\noindent {\bf Proof.}~}{\hfill$\Box$
\medskip
}
\DeclareMathAlphabet{\mathbb}{U}{msb}{m}{n}
\DeclareSymbolFont{ams}{U}{msa}{m}{n}
\DeclareSymbolFontAlphabet{\mathams}{ams}
\DeclareMathSymbol{\filter}{\mathams}{ams}{22}
\title{Bisimilarity and refinement for hybrid(ised) logics}
\author{
Alexandre Madeira 
\institute{HASLab - INESC TEC \& Univ. Minho\\ Dep. Mathematics, Univ. Aveiro,  Portugal\\
Critical Software S.A., Portugal} 
\email{madeira@ua.pt} 
\and
Manuel A. Martins 
\institute{Dep. Mathematics, \\ Univ. Aveiro,  Portugal} 
\email{martins@ua.pt} 
\and
Luís S.~Barbosa
\institute{HASLab - INESC TEC \& Univ. Minho \\ Braga, Portugal}
\email{lsb@di.uminho.pt}
}
\begin{document}
\maketitle

\begin{abstract}

The complexity of modern software systems entails the need for
reconfiguration mechanisms governing the dynamic evolution of their
execution configurations in response to both external stimulus or
internal performance measures. Formally, such systems may be
represented by transition systems whose nodes correspond to the
different configurations they may assume. Therefore, each node is
endowed with, for example, an algebra, or a first-order structure, to
precisely characterise the semantics of the services provided in the
corresponding configuration.  

Hybrid logics, which add to the modal description of transition structures the ability to refer to specific states, offer a generic framework to approach the specification and design of this sort of systems. Therefore, the quest for suitable notions of equivalence and refinement between models of hybrid logic specifications becomes fundamental to any design discipline adopting this perspective. This paper contributes to this effort from a distinctive point of view: instead of focussing on a specific hybrid logic, the paper introduces notions of bisimilarity and refinement for hybridised logics, i.e. standard specification logics (e.g. propositional, equational, fuzzy, etc) to which modal and hybrid features were added in a systematic way.

\end{abstract}

\section{Introduction}

The qualifier \emph{reconfigurable} is used for software systems which
behave differently in different modes of operation (often called
\emph{configurations}) and commute between them along their
lifetime. Formally, such different behaviours can be modelled by
imposing additional structure upon states in a transition system
expressing the overall system's dynamics. This path has been explored
in the authors' recent work \cite{sefm11} on a specification
methodology for reconfigurable systems. The basic insight is that,
starting from a classical state-machine specification, each state,
regarded as a possible system's configurations,  is equipped with a rich
mathematical structure to describing its
 functionality. Technically, specifications become
structured state-machines, states denoting algebras or first order
structures, rather than sets. 

A specification for this sort of system, as discussed in \cite{sefm11}, should be able to make assertions both about the transition dynamics and, locally, about each particular configuration. This leads to the adoption of hybrid logic \cite{livro_brauner} as the specification \emph{lingua franca} for the envisaged methodology.

 However, because specific problems may require specific logics to describe their configurations (\eg, equational, first-order, fuzzy, etc.), our approach is rooted on very general grounds. Instead of choosing a particular version of hybrid logic, we start by choosing a specific logic for expressing requirements at the configuration (static) level. This is later taken as the \emph{base} logic on top of which the characteristic features of hybrid logic, both at the level of syntax (i.e. modalities, nominals, etc.) and of the semantics (i.e. possible worlds), are developed. This process is called \emph{hybridisation} and was characterised in \cite{calco,paperdiaco} as well as in the first author's forthcoming PhD thesis \cite{madeirathesis}. To be completely general, the approach to hybridisation is framed in the context of the institution theory of Goguen and Burstall \cite{ins,livrodiaconescu}, each logic (base and hybridised) treated abstractly as an \emph{institution}.

In this context,  the quest for suitable notions of \emph{equivalence} and \emph{refinement} between models of hybridised logic specifications becomes fundamental to the envisaged design methodology. Such is the purpose of the present paper. Its contribution is a characterisation of bisimilarity and refinement for hybridised logics which requires a form of \emph{elementary equivalence} \cite{Hodges:1997:SMT:262326} between bisimilar states,
as a generic formulation of the usual informal requirement that \emph{truth remains invariant}. Clearly what \emph{elementary equivalent} means in each case boils down to the way the satisfaction relation is defined for the base logic used to specify the semantics of local configurations.  

The choice of similarity and bisimilarity to base refinement and equivalence of (models of) reconfigurable systems seems quite standard as a fine grained approach to observational methods for systems comparison. The notion of bisimulation and the associated conductive proof method, which is now pervasive in Computer Science, originated in concurrency theory due to the seminal work of David Park \cite{Par81} and R. Milner  in the quest for an appropriate definition of observational equivalence for communicating processes.
But the concept also arose independently in modal logic as a refinement of notions of homomorphism between algebraic models.
In the sequel the concept is revisited for models of hybridised logics adding up to the design methodology mentioned above.

The paper  is organized as
follows: Section \ref{background} recalls institutions as abstract characterisations of logics and provides a brief, and simplified, overview of the 
hybridization method proposed in \cite{calco,paperdiaco}. This forms the context for the paper's contribution. Then, Section~\ref{sec-bis} introduces 
 a general 
notion of bisimulation for hybridised logics and  characterizes the
preservation of logic satisfaction under it.   Section
\ref{sec-sim} follows a similar path but focussing  on refinement as witnessed by a simulation relation.

\section{Background}\label{background}
\subsection{Institutions}
An \emph{institution} is a category theoretic
formalisation\footnote{The language of category theory  \cite{MacLaneS:catwm}  is used to set the scene for institutions; categories, however, play no role in the paper's contribution.} of a logical system, encompassing
syntax, semantics and satisfaction. The concept was put forward by Goguen and
Burstall, in the end of the 
seventies, in order to \emph{``formalise the formal notion of logical
systems''}, in response  to the \emph{``population explosion among the
logical 
systems used in Computing Science''} \cite{ins}. 

The universal character of  institutions proved effective and resilient as witnessed by the 
 wide number of logics formalised in this framework.
Examples range from the usual logics in
classical mathematical logic (propositional, equational, first order, etc.), to the ones underlying 
specification and programming languages or used  for describing particular systems from different domains.
Well-known examples include
 \emph{probabilistic logics}
 \cite{DBLP:conf/wcii/BeierleK02},  \emph{quantum logics}
 \cite{DBLP:conf/birthday/CaleiroMSS06}, \emph{hidden and observational
   logics} \cite{hiding,DBLP:journals/jlp/BidoitH06}, \emph{coalgebraic
 logics} \cite{DBLP:journals/jlp/Cirstea06},  as well as logics for reasoning about \emph{process
 algebras} \cite{Mossakowski:2006:SCP:1763794.1763801}, 
 \emph{functional}
 \cite{livro_sannella,DBLP:journals/tcs/SchroderM09}
 and \emph{imperative programing 
 languages} \cite{livro_sannella}.
 
The theory of institutions (see \cite{livrodiaconescu} for a extensive account) was 
motivated by the need to abstract from the particular
details of each individual logic and characterise generic issues, such as satisfaction and combination of logics, in very general terms.
In Computer Science, this lead to  the development of a solid
\emph{institution-independent specification theory}, on which,  structuring and parameterisation mechanisms, required to scale up software specification methods,
are defined `once and for all', irrespective of the concrete logic used in each application domain.
 The definition is recalled below (e.g., \cite{ins,livrodiaconescu})
 and illustrated with a few examples 
 to which we  return later in the paper.

\begin{definition}[Institution]\label{ins-dfn}
An  \emph{institution}
\[\I=\big(\SignI, \SenI, \ModI, (\models^\mathcal{I}_\Sigma)_{\Sigma \in
  |\SignI|}\big)\]
consists of 
\begin{itemize}\itemsep -1pt
\item a category $\SignI$ whose objects are called \emph{signatures} and arrows  \emph{signature} morphisms;
\item a functor $\SenI\co\SignI \rightarrow \Set$  giving for each 
signature a set whose elements are called \emph{sentences} over that signature;
\item\label{itemmodels}  a functor $\ModI\co(\SignI)^{op}\rightarrow \CAT$, 
giving for each signature $\Sigma$ a category whose objects are called 
\emph{$\Sigma$-models}, and whose arrows are called 
\emph{$\Sigma$-(model) homomorphisms}; each arrow
$\varphi:\Sigma\rightarrow \Sigma' \in \SignI$, (i.e.,
$\varphi:\Sigma'\rightarrow \Sigma\in (\SignI)^{op}$) is mapped into a
functor $\ModI(\varphi):\ModI(\Sigma')\rightarrow \ModI(\Sigma)$
called a \emph{reduct functor}, whose effect is to cast a model of $\Sigma'$ as a model of $\Sigma$;
\item a relation 
$\models_\Sigma^\mathcal{I}\subseteq |\ModI(\Sigma)|\times \SenI(\Sigma)$ 
for each $\Sigma \in |\SignI|$, called the \emph{satisfaction relation}, 
\end{itemize}
such that for each morphism $\varphi\co\Sigma \rightarrow \Sigma' \in \SignI$, 
the satisfaction condition
\begin{equation}
M'\models^{\mathcal{I}}_{\Sigma'} \SenI(\varphi)(\rho)\; \text{ iff }\;
\ModI(\varphi)(M') \models^\mathcal{I}_\Sigma \rho
\end{equation}
holds for each $M'\in |\ModI (\Sigma')|$ and $\rho \in \SenI
(\Sigma)$. Graphically,
\[
\xymatrix{
\Sigma\ar[d]_{\varphi} & \ModI(\Sigma) \ar@{-}[rr]^{\models^\I_\Sigma}
& &\SenI(\Sigma)\ar[d]^{\SenI(\varphi)} \\
\Sigma' & \ModI(\Sigma')\ar[u]^{\ModI(\varphi)}\ar@{-}[rr]_{\models^\I_{\Sigma'}}
&&\SenI(\Sigma')\\
}
\]
\end{definition}

\begin{example}[Propositional Logic]\label{hpl-ex}
A signature $Prop \in |\Sign^{\PL}|$ is a set of
propositional variables symbols and a signature morphism is just a function
$\varphi:Prop\rightarrow Prop'$
Therefore, $\Sign^{\PL}$ coincides with the category $\Set$. 

Functor $\Mod$ maps each signature
$Prop$ to the  category $\Mod^{\PL}(Prop)$ and each signature
morphism $\varphi$ to the reduct functor
$\Mod^{\PL}(\varphi)$. Objects of $\Mod^{\PL}(Prop)$ are functions $M:Prop\rightarrow \{\top,
\bot\}$ and, its morphisms, functions $h:Prop\rightarrow Prop$
such that $M(p) = M'(h(p))$. Given a signature morphism
$\varphi:Prop\rightarrow Prop'$, the reduct of a model $M'\in
|\Mod^{\PL}(Prop')|$, say $M=\Mod^{\PL}(\varphi)(M')$ is defined, for
each $p\in Prop$, as $M(p)=M'(\varphi(p))$.

The sentences functor maps each
signature $Prop$ to the set of propositional sentences
$\Sen^{\PL}(Prop)$ and each morphism $\varphi:Prop\rightarrow Prop'$
to the sentences' translation
$\Sen^{\PL}(\varphi):\Sen^{\PL}(Prop)\rightarrow
\Sen^{\PL}(Prop')$. The set $\Sen^{\PL}(Prop)$ is the usual set
of propositional formulae defined by the grammar 
\[\rho ::= p \;|\; \rho\vee \rho \;|\; \rho \wedge \rho \;|\; \rho \Rightarrow
\rho \;|\; \neg \rho
\]
for $p\in Prop$. The translation of a sentence 
$\Sen^{\PL}(\varphi)(\rho)$ is obtained by replacing each proposition of
$\rho$ by the respective $\varphi$-image.

Finally, for each $Prop\in \Sen^{\PL}$, the satisfaction relation
$\models^\PL_{Prop}$ is defined as usual:
\begin{itemize}
  \item[--] $M\models^\PL_{Prop} p\; $ iff $\; M(p)=\top$, for any $p\in Prop$;
  \item[--] $M\models^\PL_{Prop} \rho \vee \rho'\; $ iff $\; M\models^\PL_{Prop} \rho$ or
    $M\models^\PL_{Prop} \rho'$,
\end{itemize}
and similarly for the other connectives.
\end{example}

\begin{example}[Equational logic]

\medskip
Signatures in the institution $\EQ$ of equational logic are  pairs $(S,F)$  where
 $S$ is a set of sort symbols and  $F = \{ F_{\arity\ra s} \mid \arity\in
 S^*, s\in S \}$ is a family of sets of 
operation symbols indexed by arities $\arity$ (for the arguments) and
sorts $s$ (for the results).
\emph{Signature morphisms} map both components in a compatible way:
they consist of pairs $\varphi = (\varphi^\st,  \varphi^\op) \co (S,F) \ra (S',F')$, where
$\varphi^\st \co S \ra S'$ is a function, and $\varphi^\op = 
\{ \varphi^\op_{\arity\ra s} \co F_{\arity\ra s} \ra F'_{\varphi^\st (\arity) \ra \varphi^\st (s)} 
\mid \arity\in S^*, s\in S \}$ a  family of functions mapping
operations symbols respecting arities. 

 A model $M$ for a
signature $(S,F)$ is an algebra
interpreting each sort symbol $s$ as a carrier set $M_s$ and each operation symbol 
$\sigma \in F_{\arity}\rightarrow s$ as a function
$M_\sigma:M_{\arity}\rightarrow M_s$, where $M_{\arity}$ is the product of the arguments' carriers.
Model morphism are 
homomorphisms of algebras, i.e., $S$-indexed families of functions
 $\{ h_s \co M_s \ra M'_s\mid s\in S \}$  such that for any $m\in
 M_{\arity}$, and for each  $\sigma \in F_{\arity \ra s}$,   $h_s (M_\sigma
 (m)) = M'_\sigma (h_{\arity} (m))$.
For each signature morphism $\varphi$, the \emph{reduct} 
of a model $M'$, say $M=\Mod^{\EQ}(\varphi)(M')$ is defined by $(M)_x =
M'_{\varphi(x)}$ for each sort and function symbol $x$ from
the domain signature of $\varphi$.
The models functor maps
signatures to categories of algebras and signature
morphisms to the respective reduct functors.

Sentences are
universal quantified equations $(\forall X) t=t'$.
Sentence translations along a signature morphism $\varphi:(S,F)
\rightarrow ( S',F')$, i.e.,  $\Sen^{\EQ}(\varphi):\Sen^{\EQ}(
S,F)\rightarrow \Sen^{\EQ}( S',F')$, replace symbols of $( S,F)$
by the respective  $\varphi$-images in $( S',F')$. The sentences functor maps each
signature to the set of first-order sentences and each signature
morphism to the respective sentences translation.
The satisfaction relation is the usual Tarskian
satisfaction defined recursively on the structure of the sentences
as follows:

\begin{itemize}\itemsep -1pt
\item $M \models_{(S,F)} t=t'$ when $M_t = M_{t'}$, where $M_t$ 
denotes the interpretation of the $(S,F)$-term $t$ in $M$ defined
recursively by 
$M_{\sigma(t_1,\dots,t_n)} = M_\sigma (M_{t_1},\dots,M_{t_n})$. 
\item $M \models_{(S,F)} (\forall X)\rho$ when 
$M' \models_{(S,F+X)} \rho$ for any $(S,F+X)$-expansion $M'$
of $M$.
\end{itemize}
\end{example}

\begin{example}[Propositional Fuzzy Logic]

Multi-valued logics  \cite{Gottwald01atreatise} generalise classic logics 
by replacing, as its \emph{truth domain}, the 2-element  Boolean algebra, by larger sets structured as
\emph{complete residuate lattices}.
They were originally
formalised as institutions in \cite{Agusti-CullellEGG90} (but see also  \cite{DBLP:journals/mlq/Diaconescu11} for a recent reference).

\emph{Residuate lattices }
are tuples $L=(\textbf{L},\leq, 
\wedge,\vee,\top,\bot,\otimes)$, where
\begin{itemize}
  \item $(\textbf{L},\wedge,\vee,\top,\bot)$ is a lattice ordered by $\leq$,
    with carrier $\textbf{L}$, with (binary) infimum ($\wedge$) and supremum ( $\vee$), 
    and bigest and
    smallest elements $\top$ and $\bot$;
  \item $\otimes$ is an associative binary operation such for any elements
    $x,y,z\in L$:
    \begin{itemize}
      \item $x \otimes \top= \top \otimes x= x$;
      \item  $y\leq z$ implies that $(x \otimes y)\leq (x \otimes z)$;
      \item there exists an element $x\Rightarrow z$ such that 
        \[y\leq (x\Rightarrow z) \text{ iff } x\otimes y\leq z.\]
    \end{itemize}
The residuate lattice $L$ is complete if any subset $ S\subseteq\textbf{L}$
has infimum and supremum denoted by $\bigwedge S$ and $\bigvee S$, respectively.
\end{itemize}

Given a complete residuate lattice $L$, the institution
$\MVL_L$ is defined as follows.
\begin{itemize}
   \item $\MVL_L$-signature are $\PL$-signatures.
  \item Sentences of $\MVL_L$ consist of pairs $(\rho,p)$ where $p$ is
    an element of $L$ and $\rho$ is defined as a $\PL$-sentence over the set of connectives $\{\Rightarrow
\vee,\top,\bot,\otimes\}$.

 \item A  $\MVL_L$-model $M$ is a function $M:FProp\rightarrow L$.
\item For any $M\in \Mod^{MVL_L}(FProp)$ and for any $(\rho,p)\in
  \Sen^{MVL_L}(FProp)$ the satisfaction relation is 
  \[M\models^{\MVL_L}_{FProp}(\rho,p)\; \text{ iff }\;  p\leq (M\models \rho)\]
where $M\models \rho$ is inductively defined as follows:
\begin{itemize}
  \item for any proposition $p\in FProp$,
    $(M\models p)=M(p))$; 
  \item $(M\models \top)=\top $; 
  \item $(M\models \bot)=\bot $; 
  \item $(M\models \rho_1 \star \rho_2)= (M\models \rho_1) \star
    (M\models \rho_2)$, for $\star\in\{ \vee,\Rightarrow,\otimes\}$;  
 \end{itemize}
\end{itemize}

This institution captures many multi-valued logics in
the literature. 
 For instance, taking $L$ as the {\L}ukasiewicz arithmetic lattice over
 the closed interval $[0,1]$, where $x\otimes y = 1- max\{0, x + y
 -1)\}$ (and $x \Rightarrow y = min\{1,1-x+y\}$), 
yields the standard \emph{propositional fuzzy logic}.
\end{example}

\subsection{Brief overview on the hybridisation method}

Having recalled the notion of an institution, we shall now briefly review the core of the \emph{hybridisation} method mentioned in the introduction and proposed in \cite{calco,paperdiaco}.
We concentrate in a simplified version, \ie, quantifier-free and non-constrained, of the general method.
The method enriches a base (arbitrary)
institution $\mathcal{I}= 
(\SignI, \SenI, \ModI, (\models^{\mathcal{I}}_{\Sigma})_{\Sigma \in
  |\SignI|})$ with  hybrid logic
features and the corresponding Kripke semantics. The result is  still an institution, $\HI$, 
called the \emph{hybridisation of $\I$}. 

\medskip
\noindent \emph{The category of $\HI$-signatures.} First of all
the base signature is enriched with nominals and 
polyadic modalities. Therefore, the category of \emph{$\mathcal{I}$-hybrid signatures}, denoted by  $\SIGN$,
is defined as the direct (cartesian) product of categories:
\[
\SIGN=\SignI \times \Sign^{\REL}. 
\]
Thus, signatures are triples $(\Sigma, \Nom, \Lambda)$, where
$\Sigma\in |\SignI|$ and, in the $\REL$-signature $(\Nom, \Lambda)$, $\Nom$ 
is a set of constants called \emph{nominals} and $\Lambda$ 
is a set of relational symbols called \emph{modalities}; 
$\Lambda_n$ stands for the set of modalities of arity $n$. Morphisms
$\varphi\in \SIGN((\Sigma,\Nom,\Lambda),(\Sigma',\Nom',\Lambda'))$
are  triples $\varphi=(\varphi_\Sig,\varphi_\Nom,\varphi_\MS)$
where $\varphi_\Sig\in \SignI(\Sigma,\Sigma')$,
$\varphi_\Nom:\Nom\rightarrow\Nom'$ is a function and
$\varphi_\MS=(\varphi_n:\Lambda_n\rightarrow \Lambda'_n)_{n\in
\mathbb{N}}$ a $\mathbb{N}$-family of functions mapping nominals and
$n-ary$-modality symbols, respectively.

\medskip
\noindent\emph{$\mathcal{HI}$-sentences functor.} The second step is  to enrich the
base sentences accordingly. The sentences of the base institution and the nominals are taken as atoms
and  composed with the boolean connectives,
modalities, and satisfaction operators as follows:
$\SEN(\Sigma, \Nom,\Lambda)$ is the least set such that 
\begin{itemize}\itemsep -1pt

\item $\Nom \subseteq \SEN(\Delta)$;

\item $\Sen^{\I} (\Sigma)\subseteq \SEN(\Delta)$;

\item $\rho \star \rho' \in \SEN(\Delta)$ for any
  $\rho, \rho'\in \SEN(\Delta)$ and any
  $\star \in \{\vee,   \wedge,\implies\}$,  

\item  $\neg \rho \in \SEN(\Delta)$, for any $\rho\in \SEN(\Delta)$,

\item $@_i \rho\in \SEN(\Delta) $ for any $\rho \in \SEN(\Delta)$ and $i \in \Nom$;

\item $[\lambda](\rho_1,\dots,\rho_n), \langle \lambda \rangle(\rho_1,\dots,\rho_n)
\in \SEN(\Delta)$, for any $\lambda\in \Lambda_{n+1}, \rho_i \in \SEN(\Delta)$, 
$i\in \{1,\dots, n\}$.

\end{itemize}

\noindent
Given a $\mathcal{HI}$-signature morphism
$\varphi=(\varphi_{\Sig},\varphi_{\Nom},\varphi_{\mathrm{\MS}}) \co 
( \Sigma, \Nom, \Lambda )\ra ( \Sigma', \Nom', \Lambda' )$, 
the translation of sentences $\SEN(\varphi)$ is defined as follows:
\begin{itemize}\itemsep -1pt

\item $\SEN(\varphi)(\rho)=\Sen^\mathcal{I}(\varphi_{\Sig})(\rho)$ 
for any $\rho\in \Sen^\mathcal{I}(\Sigma)$;

\item $\SEN(\varphi)(i)=\varphi_{\Nom}(i)$;

\item $\SEN(\varphi)(\neg \rho)=\neg\SEN(\varphi)(\rho)$;

\item $\SEN(\varphi)(\rho \star \rho')=\SEN(\varphi)(\rho)\star\SEN(\varphi)(\rho')$, 
$\star \in \{\vee, \wedge, \implies\}$;

\item $\SEN(\varphi)(@_i \rho)=@_{\varphi_{\Nom}(i)}\SEN(\rho)$;

\item $\SEN(\varphi)([\lambda](\rho_1,\dots,\rho_n))=
[\varphi_{\mathrm{\MS}}(\lambda)](\SEN(\rho_1),\dots,\SEN(\rho_n))$;

\item $\SEN(\varphi)(\langle\lambda\rangle(\rho_1,\dots,\rho_n))=
\langle\varphi_{\mathrm{\MS}}(\lambda)\rangle(\SEN(\rho_1),\dots,\SEN(\rho_n))$.

\end{itemize}

\emph{$\mathcal{HI}$-models functor.} Models of the hybridised logic $\HI$ can be regarded
as ($\Lambda$-)Kripke structures whose worlds are 
$\I$-models. Formally \emph{$(\Sigma,\Nom,\Lambda)$-models} are pairs 
$(M, W)$ where 
\begin{itemize}\itemsep -1pt

\item $W$ is a $(\Nom,\Lambda)$-model in $\REL$;

\item $M$ is a function $|W| \rightarrow |\ModI (\Sigma)|$.

\end{itemize}
In each  world  $(M,W)$,  
$\{ W_n \mid n \in \Nom \}$ provides interpretations for
\emph{nominals} in $\Nom$, whereas relations 
$\{ W_\lambda \mid \lambda\in \Lambda_n, n\in \omega \}$ 
interprete  \emph{modalities} $\Lambda$. 
We denote $M(\w)$ simply by $M_{\w}$.
The reduct definition is lifted from the base institution: the reduct
of a $\Delta'$-model $(M',W')$ along a signature morphism
$\varphi=(\varphi_{\Sig},\varphi_{\Nom},\varphi_{\mathrm{\MS}}):\Delta\rightarrow
\Delta'$, 
denoted by $\MOD(\varphi)(M',W')$, is the $\Delta$-model
$(M,W)$ such that
\begin{itemize}\itemsep -1pt

\item $W$ is the $(\varphi_{\Nom},\varphi_{\mathrm{\MS}})$-reduct of $W'$; i.e.

\begin{itemize}

\item $|W|=|W'|$;

\item for any $n \in \Nom, W_n =W'_{\varphi_{\Nom}(n)}$;

\item for any $\lambda \in \Lambda$,
  $W_\lambda =W'_{\varphi_{\mathrm{\MS}}(\lambda)}$; 

\end{itemize}

\item for any $\w\in |W|$, $M_\w=\Mod^\mathcal{I}(\varphi_{\Sig})(M'_\w).$

\end{itemize}

\medskip 
\noindent \emph{The Satisfaction Relation.} Let
$(\Sigma,\Nom,\Lambda)\in |\SIGN|$ and 
$(M,W)\in |\MOD(\Sigma, \Nom, \Lambda)|$. For
any $\w\in |W|$ we define:

\begin{itemize}\itemsep -1pt
		
\item $(M,W)\models^\w \rho$ iff $M_\w\models^\mathcal{I} \rho$; 
when $\rho \in \Sen^\mathcal{I}(\Sigma)$,

\item $(M,W)\models^\w i$ iff  $W_i=\w$; when $i\in \Nom$,
		
\item $(M,W)\models^\w \rho \vee \rho' $ iff  $(M,W)\models^\w \rho$  or  
$(M,W)\models^\w \rho'$,
		
\item $(M,W)\models^\w \rho \wedge \rho' $ iff  $(M,W)\models^\w \rho$  and  
$(M,W)\models^\w \rho'$,

\item $(M,W)\models^\w \rho \implies \rho' $ iff  
$(M,W)\models^\w \rho$  implies that  $(M,W)\models^\w \rho'$,

\item $(M,W)\models^\w \neg \rho$ iff  
$(M,W)\not{\models^\w} \rho$,

\item $(M,W)\models^\w [\lambda](\xi_1,\dots,\xi_n)$ iff for any 
$(\w,\w_1,\dots,\w_n) \in W_\lambda$ we have that
$(M,W)\models^{\w_i}\xi_i$ for some $1\leq i\leq n$.

\item $(M,W)\models^\w 
    \langle\lambda\rangle(\xi_1,\dots,\xi_n)$ iff there exists $(\w,\w_1,\dots,\w_n)
    \in W_\lambda$ such that and $(M,W)\models^{\w_i}\xi_i$ for any  
    $1\leq i\leq n$.

\item $(M,W)\models^\w @_j \rho$ iff $(M,W)\models^{W_j} \rho$,

\end{itemize}
We write $(M,W)\models \rho$ iff $(M,W) \models^\w \rho$ for any $\w \in |W|$.

\medskip \noindent As expected $\HI$ is itself an
institution:
\begin{theorem}[\cite{calco}]
Let $\Delta=(\Sigma, \Nom, \Lambda)$ and $\Delta'=(\Sigma',\Nom', \Lambda')$ be
two $\mathcal{HI}$-signatures and $\varphi\co\Delta\rightarrow \Delta'$ 
a morphism of signatures. For any $\rho \in \SEN(\Delta)$,  $(M',W')\in |\Mod^{C}(\Delta')|$,
and $\w\in |W|$,
\begin{center}
$\MOD(\varphi)(M',W') \models^\w \rho$ 
iff  
$(M',W')\models^\w \SEN(\varphi)(\rho).$
\end{center}
\end{theorem}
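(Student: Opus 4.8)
The plan is to prove the statement by structural induction on the sentence $\rho \in \SEN(\Delta)$, but with the induction hypothesis stated uniformly for \emph{every} world at once. This strengthening is essential: the satisfaction operators $@_j$ and the modalities $[\lambda],\langle\lambda\rangle$ evaluate their arguments at worlds other than $\w$, so to discharge those cases one needs the equivalence to hold at all worlds simultaneously. Writing $(M,W)=\MOD(\varphi)(M',W')$, I would first record the three facts that drive the whole argument and follow directly from the definition of the $\HI$-reduct: the frames share the same carrier, $|W|=|W'|$; nominals and modalities are merely reindexed, $W_i=W'_{\varphi_\Nom(i)}$ and $W_\lambda=W'_{\varphi_\MS(\lambda)}$; and at each world the local model is the base reduct, $M_\w=\ModI(\varphi_\Sig)(M'_\w)$.

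For the base cases there are two. When $\rho\in\SenI(\Sigma)$, I would unfold $(M,W)\models^\w\rho$ to $M_\w\models^\mathcal{I}\rho$, rewrite $M_\w=\ModI(\varphi_\Sig)(M'_\w)$, and then apply the \emph{satisfaction condition of the base institution} $\I$ to obtain $M'_\w\models^\mathcal{I}\SenI(\varphi_\Sig)(\rho)$; since $\SEN(\varphi)(\rho)=\SenI(\varphi_\Sig)(\rho)$ on base sentences, this is exactly $(M',W')\models^\w\SEN(\varphi)(\rho)$. This case is the single point where the base institution's satisfaction condition is consumed. When $\rho=i$ is a nominal, both sides reduce to the same frame-level equation: $(M,W)\models^\w i$ iff $W_i=\w$ iff $W'_{\varphi_\Nom(i)}=\w$ iff $(M',W')\models^\w\varphi_\Nom(i)$, and $\varphi_\Nom(i)=\SEN(\varphi)(i)$.

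The inductive cases split into three groups. The Boolean connectives are immediate, since satisfaction is defined componentwise and $\SEN(\varphi)$ commutes with $\vee,\wedge,\implies,\neg$, so the induction hypothesis at the \emph{same} world $\w$ closes each case. For a satisfaction operator $@_j\rho$, I would use $(M,W)\models^\w @_j\rho$ iff $(M,W)\models^{W_j}\rho$, invoke the induction hypothesis at the world $W_j$, and then rewrite $W_j=W'_{\varphi_\Nom(j)}$ to fold the result into $(M',W')\models^\w @_{\varphi_\Nom(j)}\SEN(\varphi)(\rho)$, which by the definition of $\SEN(\varphi)$ is precisely $(M',W')\models^\w\SEN(\varphi)(@_j\rho)$. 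The modal cases $[\lambda](\xi_1,\dots,\xi_n)$ and $\langle\lambda\rangle(\xi_1,\dots,\xi_n)$ are analogous: because $|W|=|W'|$ and $W_\lambda=W'_{\varphi_\MS(\lambda)}$, the tuples quantified over on both sides coincide, and applying the induction hypothesis to each $\xi_i$ at the accessible worlds $\w_i$ transforms the defining clause verbatim into the corresponding clause for $[\varphi_\MS(\lambda)]$, respectively $\langle\varphi_\MS(\lambda)\rangle$, applied to the translated arguments.

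I do not expect a genuine obstacle here; the only points demanding care are bookkeeping ones. The first is remembering to quantify the induction hypothesis over all worlds, without which the $@$ and modal cases cannot be closed. The second is the clean separation of concerns in the base case, where the entire hybrid layer contributes nothing and the equivalence is inherited wholesale from the satisfaction condition of $\I$ together with the definition of the reduct's local models. Finally, the global satisfaction condition, $\MOD(\varphi)(M',W')\models\rho$ iff $(M',W')\models\SEN(\varphi)(\rho)$, follows immediately from the world-indexed statement by universally quantifying over $\w\in|W|=|W'|$.
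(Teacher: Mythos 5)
Your proof is correct, and you should note that the paper itself gives no proof of this statement: it is imported wholesale from the hybridisation paper \cite{calco}, so the only internal point of comparison is the paper's proof of its Theorem 3.1, whose inductive scheme yours closely mirrors. Your argument --- structural induction with the hypothesis quantified over all worlds at once, the three reduct identities $|W|=|W'|$, $W_i=W'_{\varphi_\Nom(i)}$, $W_\lambda=W'_{\varphi_\MS(\lambda)}$, the local-model identity $M_\w=\ModI(\varphi_\Sig)(M'_\w)$, and a single invocation of the base institution's satisfaction condition in the base case --- is exactly the standard proof of this satisfaction condition, with no gaps.
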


Let us illustrate the method by applying it to the three institutions described above.

\begin{example}[$\HPL$]\label{ex-hpl} 
The hybridisation of the propositional logic institution $\PL$ is an institution where
 signatures are triples $(Prop,\Nom,\Lambda)$ and sentences are generated by
\begin{equation}\label{sen-HPL}
\rho ::= \rho_0 \;|\; i \;|\; @_i \rho \:|\; \rho\odot \rho
\;|\; \neg \rho \;|\; \langle \lambda \rangle (\rho,\dots,\rho) \;|\; [\lambda] (\rho,\dots,\rho)
\end{equation}
where $\rho_0\in \Sen^{PL}(Prop)$, $i\in \Nom$, $\lambda \in \Lambda_n$
and $\odot=\{\vee,\wedge,\Rightarrow\}$. Note there is a double
level of connectives in the sentences: the one coming from base $\PL$-sentences
and another introduced by the hybridisation process. However, they 
 ``semantically collapse'' and, hence, no distinction between them needs to be done
(see \cite{paperdiaco} for details). 
A
$(Prop,\Nom,\Lambda)$-model is a pair $(M,W)$, where $W$ is a transition
structure with a set of worlds $|W|$. Constants $W_i, i\in
\Nom$ stand for the named worlds and  $(n+1)$-ary relations $W_\lambda$, $\lambda \in
\Lambda_n$ are the accessibility relations characterising the structure. For
each world $w\in |W|$, $M(w)$ is a (local) $\PL$-model, assigning 
propositions in $Prop$ to the world $w$. 

Restricting the signatures to those with just a single unary modality
(i.e., where $\Lambda_1=\{\lambda\}$ and $\Lambda_n=\emptyset$ for the
remaining $n\neq 1$), results in the usual institution for classical hybrid
propositional logic \cite{livro_brauner}.
\end{example}

\begin{example}[$\HMVL_L$]\label{ex-mvl}
The institution obtained through the hybridization of  $\MVL_L$, for a
fixed $L$, is similar to the $\HPL$ institution defined above, but
for two aspects,
\begin{itemize}
  \item sentences are defined as in 
    (\ref{sen-HPL}) but considering  $\MVL$
    $FPop$-sentences $(\rho_0,p)$ as atomic;
      \item to each world $w\in|W|$ is associated a function assigning to each proposition its  
  value in $L$.

\end{itemize}

It is interesting to note
that  expressivity increases even if one restricts to the case of  a (one-world) standard semantics.
For instance, differently from the base case where each sentence is tagged
by a $L$-value, one may now deal with more structured 
expressions involving several  $L$-values, as in, for example,  $(\rho,p)\wedge(\rho',p')$.
\end{example}

\begin{example}[$\HEQ$]\label{ex-heq}
Signatures of $\HEQ$ are triples $((S,F), 
\Nom,\Lambda)$ and the sentences are defined as in (\ref{sen-HPL}) but
taking $(S,F)$-equations $(\forall X) t=t'$ as atomic base
sentences. Models are Kripke structures with a
(local)-$(S,F)$-algebra per world. This institution is a suitable
framework to specify reconfigurable system in a
``configurations-as-worlds'' perspective: distinct
configurations are modelled by distinct algebras; and 
reconfigurations expressed by transitions
(c.f. \cite{sefm11,madeirathesis}). Clearly, in this sort of specifications  interfaces are given equationally, based on 
 $\EQ$-signatures. Nominals identify the ``relevant'' configurations and reconfigurations amount to state transitions.
 Therefore, one resorts to equations tagged with the satisfaction operators to specify
the configurations, plain equations to specify global properties of
the system and the modal features to specify its reconfigurability
dynamics.  
\end{example}

\section{Bisimulation for hybridised Logics}\label{sec-bis}
Having briefly reviewed what an institution is and how, through a systematic process, one may introduce in an arbitrary logic both modalities and nominals to explicitly refer to states in a specification, we may now focus on the paper's specific contribution. Our starting point is a method to specify reconfigurable software as transition systems whose states represent  particular configurations. They can themselves be an algebraic specification, a relation structure or even another, local transition system. Such two-staged specifications are common in the Software Engineering practice (see, e.g., Gurevich's Abstract State Machines  \cite{ASM03}); the originality of our method lies in its genericity: whatever logic is found useful to specify each concrete configuration, a method is offered to compute its hybrid counterpart. In this setting, this section and the following one seek for suitable notions of equivalence and refinement for this kind of specifications. Naturally, such notions should also be parametric on the base logic used, \ie, on the language in which the specifications of each concrete configuration are written. The price to pay is, of course, some extra notation and the use of a generic framework --- that of \emph{institutions} --- in which concepts can be formulated and results proved once and for all. 

As the external layer of a reconfigurable system specification is that of a transition system, it is natural to resort to suitable formulations of \emph{bisimilarity} and \emph{similarity} to capture equivalence and refinement, respectively. The precise characterisation of such notions at the high level of abstraction chosen, is, in fact, the paper's contribution.

Intuitively a bisimulation relates worlds which exhibit the ``same" (observable) information and preserves this property along transitions.
Thus, to  define a general notion of bisimulation over Kripke structures whose states are models of whatever base logic was chosen for specifications,
we have to make precise what  the ``same"  information actually means. For example, if the system's configurations are specified by \emph{equations}, as abstract data types, to establish that two such configurations are  bisimilar will certainly require that each specification generates the same variety. Actually, in this case, they are essentially the same data type.
In the more general
setting of this paper the base logic is a parameter and we have to deal with its hybridised version $\HI$. Our proposal is, thus, to resort to the broad notion of 
 \emph{elementary
 equivalence} (e.g.\cite{Hodges:1997:SMT:262326}), and add to the bisimulation definition the requirement that local configurations, \ie, local $\I$-models related by a bisimulation be
\emph{elementarily equivalent}. Formally, 

\begin{definition}
Let  $M, M'\in \ModI(\Sigma)$ and $\Sen'$ be a subfunctor of $\SenI$. Models $M$ and $M'$ are
elementarily equivalent with respect to sentences in $\Sen'(\Sigma)$, in symbols $M\equiv^{\Sen'} M'$, if for any $\rho\in
\Sen'(\Sigma)$
     \begin{equation} \label{elqsimple}
       M\models^\I \rho \; \, \text{iff}\; \, M'\models^\I \rho.
     \end{equation}
\end{definition}

Under the institution theory \emph{motto} --- \emph{truth is invariant under change of notation} ---  we write 
 $M\equiv^{\Sen'}_\varphi M'$ whenever $M\equiv^{\Sen'} \ModI(\varphi)(M')$ for 
a given $\varphi\in \SignI(\Sigma, \Sigma')$, $M\in \ModI(\Sigma)$ and
     $M'\in \ModI(\Sigma')$.    Models $M$ and $M'$ are said to be  $\varphi,\Sen'$-elementarily equivalent.
     
Resorting to the satisfaction condition in $\I$, the following characterisation of  $\varphi,\Sen'$-elementary equivalence pops out:

\begin{corollary}\label{corelq}  $M\equiv^{\Sen'}_\varphi M'\; $ iff, for any $\, \rho \in \Sen'(\Sigma)\text{, }\; M\models^{\I}_\Sigma \rho
    \Leftrightarrow  M'\models^{\I}_\Sigma \SignI(\varphi)(\rho)$.
\end{corollary}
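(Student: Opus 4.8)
The plan is to unfold the two definitions and apply the satisfaction condition of the base institution $\I$ exactly once. By definition, $M\equiv^{\Sen'}_\varphi M'$ means $M\equiv^{\Sen'}\ModI(\varphi)(M')$, which in turn unfolds to the statement that for every $\rho\in\Sen'(\Sigma)$,
\[
M\models^{\I}_\Sigma \rho \;\Leftrightarrow\; \ModI(\varphi)(M')\models^{\I}_\Sigma \rho.
\]
So the only thing to establish is that the right-hand disjunct of this biconditional can be rewritten, for each fixed $\rho$, as $M'\models^{\I}_{\Sigma'}\SignI(\varphi)(\rho)$.

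First I would fix an arbitrary $\rho\in\Sen'(\Sigma)$ and invoke the satisfaction condition from Definition~\ref{ins-dfn} applied to the morphism $\varphi\co\Sigma\rightarrow\Sigma'$, the model $M'\in|\ModI(\Sigma')|$, and the sentence $\rho$. That condition states precisely
\[
M'\models^{\I}_{\Sigma'}\SenI(\varphi)(\rho) \;\Leftrightarrow\; \ModI(\varphi)(M')\models^{\I}_\Sigma \rho.
\]
Substituting this equivalence into the unfolded definition yields, for every $\rho\in\Sen'(\Sigma)$,
\[
M\models^{\I}_\Sigma \rho \;\Leftrightarrow\; M'\models^{\I}_{\Sigma'}\SenI(\varphi)(\rho),
\]
which is the desired characterisation. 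Since the reasoning is a chain of biconditionals valid for each individual $\rho$, quantifying over all $\rho\in\Sen'(\Sigma)$ preserves it, giving both directions of the corollary at once.

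There is essentially no hard part here: the result is a direct corollary, and the only subtlety is purely notational. One must be careful that $\Sen'$ being a \emph{subfunctor} of $\SenI$ guarantees $\SenI(\varphi)(\rho)$ still lies in the relevant sentence set and that applying the satisfaction condition (stated for $\SenI$) is legitimate for sentences drawn from $\Sen'$; this is immediate because $\Sen'(\Sigma)\subseteq\SenI(\Sigma)$ and the satisfaction relation of $\I$ is the same relation restricted appropriately. I would therefore present the proof as a two-line application of the satisfaction condition to each $\rho$, noting only that the subfunctor hypothesis ensures everything is well-typed.
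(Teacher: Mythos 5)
Your proof is correct and follows essentially the same route as the paper, which offers no detailed argument but simply observes that the characterisation ``pops out'' by unfolding $M\equiv^{\Sen'}_\varphi M'$ to $M\equiv^{\Sen'}\ModI(\varphi)(M')$ and applying the satisfaction condition of $\I$ once per sentence. Note only that you have silently (and rightly) corrected two typos in the statement as printed: $\SignI(\varphi)(\rho)$ should read $\SenI(\varphi)(\rho)$, and the satisfaction of the translated sentence is taken at $\Sigma'$, not $\Sigma$.
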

\noindent
If only an implication $\Rightarrow$ holds in the right hand side of the above equivalence we write $M\feq^{\Sen'} M'$.
Note the  role of $\varphi$ above: as a signature morphism it captures the possible \emph{change of notation} from a specification to another.
For example it may cater for renaming 
propositions in Ex.~\ref{ex:HPL} or  signature components in
Ex.~\ref{ex:HEQ}. However, its pertinence becomes clearer in 
refinement situations, as discussed in the next section. There it may accommodate many forms of interface
enrichment or adaptation
(e.g. through the introduction of auxilliar operations).

Let us now define bisimulation in this general setting.

\begin{definition}\label{generalbisimulation}
Let $\HI$ be the hybridization of the institution $\I$ and $\varphi\in
\SIGN(\Delta,\Delta')$ a signature morphism. Let $\Sen'$ be a subfunctor of $\SenI$. A
\emph{$\varphi,\Sen'$-bisimulation between  models $(M,W)\in
  \MOD(\Delta)$ and 
$(M',W')\in \MOD(\Delta')$ } is a non-empty relation $\bis
\subseteq |W|\times |W'|$
such that

\begin{enumerate}
 \item[(i)]\label{defI}  for any $w \bis w'$, and for any $i\in \Nom$, 
$W_i=w  \text{ iff } W'_{\varphi_\Nom(i)}=w'$.
  \item[(ii)]\label{defII} for any $w \bis  w'$, 
    $M_w\equiv^{\Sen'}_{\varphi_\Sig}M'_{w'}$.
    \item[(iii)]\label{defIII} 
    for any $i\in \Nom$, $W_i \bis W'_{\varphi_\Nom(i)}$.
  \item[(iv)]\label{defIV} For any $\lambda \in
      \Lambda_n$, if $(w,w_1,\dots,w_n)\in W_\lambda$ and $w \bis w'$, then for each  $k\in \{1,\dots,n\}$ there is a
      $w'_k\in |W'|$ such that $w_k \bis w'_k$ 
      and $(w',w'_1,\dots, w'_n)\in  W'_{\varphi_\MS(\lambda)}$. 
  \item[(v)]\label{defV} For any $\lambda \in
      \Lambda_n$ if $(w',w'_1,\dots,w'_n)\in
      W'_{\varphi_\MS(\lambda)}$ and $w \bis w'$, then for each  $k\in \{1,\dots,n\}$ there is a
      $w_k\in |W|$, such that $w_k \bis w'_k$ and $(w,w_1,\dots,
      w_n)\in  W_\lambda$. 
\end{enumerate}
\end{definition}

The following result  establishes that, for quantifier-free hybridisations,
the (local)-hybrid satisfaction $\models^\HI$ is 
invariant under $\varphi,\Sen$-bisimulations:

\begin{theorem}\label{bisinvariance}

Let $\HI$ be a quantifier-free hybridization of the institution $\I$ and $\varphi\in
\SIGN(\Delta,\Delta')$ a signature morphism. Let $\bis
\subseteq |W|\times |W'|$ be a $\varphi,\Sen$-bisimulation.
 Then, for any $w \bis w'$
  and for any $\rho \in \SEN(\Delta)$,
\begin{equation}
  (M,W)\models^w \rho \text{ iff }   (M',W')\models^{w'} \SEN(\varphi)(\rho).
\end{equation}
\end{theorem}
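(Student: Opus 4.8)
The plan is to prove the biconditional by structural induction on the sentence $\rho \in \SEN(\Delta)$, following the grammar that generates $\SEN(\Delta)$ and checking that each clause of Definition~\ref{generalbisimulation} is exactly what is needed to push the equivalence through the corresponding syntactic constructor. The inductive hypothesis must be stated uniformly over the whole bisimulation: for every proper subsentence and every pair $v \bis v'$ the claim holds, not merely for the fixed pair $w \bis w'$, since the satisfaction-operator and modal clauses relocate the world of evaluation to other $\bis$-related worlds. For the base cases, a base atom $\rho_0 \in \Sen^\I(\Sigma)$ unfolds on the left to $M_w \models^\I \rho_0$ and, since $\SEN(\varphi)(\rho_0)=\Sen^\I(\varphi_\Sig)(\rho_0)$, on the right to $M'_{w'} \models^\I \Sen^\I(\varphi_\Sig)(\rho_0)$; clause~(ii) gives $M_w \equiv^{\Sen}_{\varphi_\Sig} M'_{w'}$, and Corollary~\ref{corelq} turns this into precisely the required biconditional. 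It is essential here that the bisimulation is taken with respect to the full functor $\Sen$, so that \emph{every} base atom is covered. For a nominal $i \in \Nom$, satisfaction reduces to $W_i = w$ versus $W'_{\varphi_\Nom(i)} = w'$, and clause~(i) delivers their equivalence directly.

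The Boolean and negation cases are routine: hybrid satisfaction commutes with $\vee,\wedge,\implies,\neg$, and $\SEN(\varphi)$ is defined to commute with the same connectives, so the inductive hypothesis closes these clauses with no further appeal to the bisimulation. The satisfaction-operator case $@_i\rho$ is the first to move the point of evaluation: the left side becomes evaluation at $W_i$ and the right side, translated to $@_{\varphi_\Nom(i)}\SEN(\varphi)(\rho)$, becomes evaluation at $W'_{\varphi_\Nom(i)}$. Clause~(iii) supplies exactly $W_i \bis W'_{\varphi_\Nom(i)}$, so the inductive hypothesis applies at this new pair of worlds.

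The modal cases carry the real content, and the hard part will be the polyadic back-and-forth bookkeeping and the correct pairing of clauses with directions. For $\langle\lambda\rangle(\xi_1,\dots,\xi_n)$, the forward implication uses clause~(iv): a witnessing tuple $(w,w_1,\dots,w_n)\in W_\lambda$ is transported component by component to a tuple $(w',w'_1,\dots,w'_n)\in W'_{\varphi_\MS(\lambda)}$ with $w_k \bis w'_k$, and the inductive hypothesis at each $w_k \bis w'_k$ yields the translated witness; the converse implication uses clause~(v) symmetrically. For the dual $[\lambda](\xi_1,\dots,\xi_n)$ the roles swap: the forward implication reasons about an \emph{arbitrary} primed tuple and pulls it back via clause~(v), while the converse quantifies over an arbitrary tuple of $W_\lambda$ and pushes it forward via clause~(iv); in each case the ``for some $1\le i\le n$'' disjunct is transferred because the relevant component worlds are paired by $\bis$. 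The only genuine care required is to keep the arities and the component-wise pairing straight, and to invoke~(iv) and~(v) on directions opposite to those used for the diamond.

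Finally, I would remark why the hypothesis \emph{quantifier-free} is what makes this induction exhaustive: the grammar of $\SEN(\Delta)$ in the simplified hybridisation has no state or nominal binders, so every constructor is handled by the cases above and no clause forces the bisimulation to match freshly introduced worlds. Were such binders present, the back-and-forth conditions~(iv)--(v) would no longer suffice, and one would need a stronger, ``saturated'' notion of bisimulation to account for the new states they quantify over.
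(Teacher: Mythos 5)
Your proposal is correct and follows essentially the same route as the paper's own proof: structural induction with the identical pairing of bisimulation clauses to syntactic cases (clause~(ii) plus Corollary~\ref{corelq} for base atoms, (i) for nominals, (iii) for $@_i$, and the forth/back clauses~(iv)/(v) applied in opposite directions for $\langle\lambda\rangle$ versus $[\lambda]$). Your explicit remarks that the induction hypothesis must range over \emph{all} $\bis$-related pairs and that quantifier-freeness is what makes the case analysis exhaustive are points the paper leaves implicit, but they do not change the argument.
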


  \begin{proof}
The proof is by  induction on the structure of the sentences.

\begin{enumerate}

\item $\rho=i$ for some $i\in \Nom$: 
	\begin{eqnarray*}
	& & (M, W)\models^\w i
	\just\Leftrightarrow{ defn. of $\models^w$}
        W_i =\w	
        \just\Leftrightarrow{ (i) of Defn \ref{generalbisimulation} }
        W'_{\varphi(i)} =\w'	
        \just\Leftrightarrow{ defn. of $\models^{w'}$}
        (M',W')\models^{\w'}\varphi_{\Nom}(i)
       \just\Leftrightarrow{ defn of $\SEN(\varphi)$}
       (M',W')\models^{\w'}\SEN(\varphi)(i)
\end{eqnarray*}

\item $\rho \in \Sen^{\mathcal{I}}(\Sigma)$:
	\begin{eqnarray*}
	& & (M, W)\models^\w \rho
	\just\Leftrightarrow{ defn. of $\models^w$}
        M_\w \models^{\I} \rho
        \just\Leftrightarrow{by hypothesis
          $M_w\equiv_{\varphi_\Sig}M'_{w'}$ + Cor~\ref{corelq}}
        M'_{\w'}\models \SenI(\varphi_{\Sig})(\rho)
	\just\Leftrightarrow{ defn. of $\models^{w'}$}
        (M',W')\models^{\w'}  \SenI(\varphi_{\Sig})(\rho)
	\just\Leftrightarrow{ defn of $ \SEN(\varphi)$}
        (M',W')\models^{\w'} \SEN(\varphi)(\rho)
\end{eqnarray*}

\item $\rho=\xi \vee \xi'$ for some $\xi, \xi'\in \SEN(\Delta)$:
	\begin{eqnarray*}
	& & (M, W)\models^\w \xi \vee \xi'
	\just\Leftrightarrow{ defn. of $\models^w$}
        (M, W)\models^\w \xi\; \text{or}\; (M, W)\models^\w \xi' 
        \just\Leftrightarrow{ I.H.}
       (M',W')\models^{\w'} \SEN(\varphi)(\xi)\; \text{or}\;
       \\ & & (M',W')\models^{\w'} \SEN(\varphi)(\xi') 
	\just\Leftrightarrow{ defn. of $\models^w$}
        (M',W') \models^{\w'} \SEN(\varphi)(\xi \vee \xi')
        \end{eqnarray*}

\noindent
The proofs for cases  $\rho=\xi \wedge \xi'$, $\rho=\xi\implies \xi'$, 
$\rho=\neg\xi$, etc. are analogous.

\item $\rho = [\lambda](\xi_1,\dots,\xi_n)$ for some 
$\xi_1,\dots, \xi_n \in \SEN(\Delta)$, $\lambda\in\Lambda_{n+1}$:

	\begin{eqnarray*}
	& & (M, W)\models^\w [\lambda](\xi_1~,\dots, \xi_n)
	\just\Leftrightarrow{ defn. of $\models^w$}
        \text{for any}\;  (\w,\w_1,\dots,\w_n)\in W_\lambda  \text{
          there is some } k\in \{1,\dots,n\} \\ & & \text{ such that }
        (M,W)\models^{\w_k}\xi_k 
        \just\Leftrightarrow{\textbf{*} }
         \text{for any}\;  (\w',\w'_1,\dots,\w'_n)\in
         W'_{\varphi_{\MS}(\lambda)}   \text{ there is some }  \\ 	& & p\in \{1,\dots,n\}  \text{
           such that } 	  (M',W')\models^{\w'_p}\SEN(\varphi)(\xi_p) 
        \just\Leftrightarrow{ defn. of $\models^{w'}$}
        (M',W')\models^{\w'}
        [\varphi_{\MS}(\lambda)](\SEN(\varphi)(\xi_1),\dots,
        \SEN(\varphi)(\xi_n))  
        \just\Leftrightarrow{ defn. of $\SEN(\varphi)$}
         (M',W')\models^{w'} \SEN(\varphi)([\lambda](\xi_1,\dots, \xi_n)) 
        \end{eqnarray*}
\noindent For the step marked with \textbf{*} we proceed as follows.  Supposing $(w',w'_1,\dots,w'_n)\in W'_{\varphi_\MS(\lambda)}$
with $w\bis w'$, we have by clause (v) of Defn.
\ref{generalbisimulation} that there are $w_k$, with $k\in\{1,\dots, n\}$, such
that $(w,w_1,\dots,w_n)\in W_{\lambda}$. By hypothesis,  $(M,W) \models^{w_p} \xi_p$ for some
$p\in\{1,\dots,n\}$. Moreover, by
I.H. $(M',W')\models^{\w'_p}\SEN(\varphi)(\xi_p)$. Clause (iv) of Defn.
\ref{generalbisimulation} entails the converse implication.
The proof for sentences of form $\rho=\langle\lambda\rangle(\xi_1,\dots,\xi_n)$
is analogous. 
		
\item$\rho=@_i\xi$ for some $\xi \in \SEN(\Delta)$ and $i \in \Nom$:

	\begin{eqnarray*}
	& & (M,W)\models^\w @_i\xi
	\just\Leftrightarrow{ defn. of $\models^w$}
        (M,W)\models^{W_i}\xi
        \just\Leftrightarrow{I.H. and clause (iii) of Defn~\ref{generalbisimulation}}
        (M',W')\models^{W'_{\varphi_\Nom(i)}}\SEN(\varphi)(\xi)
        \just\Leftrightarrow{ defn. of $\models^w$}
        (M',W')\models^{\w} @_{\varphi_\Nom(i)}\SEN(\varphi)(\xi)
        \just\Leftrightarrow{ defn. of $\SEN(\varphi)$}
        (M',W')\models^{\w}\SEN(\varphi)(@_i\xi)
      \end{eqnarray*}

\end{enumerate}

\end{proof}

As direct consequence of the previous theorem we get the following
characterisation of the preservation of (global) satisfaction, $\models^{\HI}$,
under $\varphi$-bisimilarity:

\begin{corollary}\label{bisinvariance1}
On the conditions of Theorem~\ref{bisinvariance},  let $(M,W) \bisim(M',W')$ witnessed by a total and surjective  bisimulation. Then,
\begin{equation}
  (M,W)\models^{\HI} \rho\;  \text{ iff }\;    (M',W')\models^{\HI}\SEN(\varphi)(\rho).
\end{equation}
\end{corollary}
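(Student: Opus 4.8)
The plan is to reduce the global statement to the local invariance already secured in Theorem~\ref{bisinvariance} by simply unfolding the definition of $\models^{\HI}$. Recall that $(M,W)\models^{\HI}\rho$ abbreviates $(M,W)\models^w\rho$ for \emph{every} $w\in|W|$, and likewise on the primed side for $\SEN(\varphi)(\rho)$. Thus both sides of the desired biconditional are universally quantified over worlds, and the two extra hypotheses on the witnessing bisimulation $\bis$ --- totality and surjectivity --- are precisely what allows such universal statements to be transported across the relation in each direction.

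First I would prove the left-to-right implication. Assume $(M,W)\models^{\HI}\rho$ and fix an arbitrary $w'\in|W'|$; the goal is $(M',W')\models^{w'}\SEN(\varphi)(\rho)$. Since $\bis$ is surjective, there is some $w\in|W|$ with $w\bis w'$. By assumption $(M,W)\models^w\rho$, so Theorem~\ref{bisinvariance} applied to this pair $w\bis w'$ yields $(M',W')\models^{w'}\SEN(\varphi)(\rho)$. As $w'$ was arbitrary, $(M',W')\models^{\HI}\SEN(\varphi)(\rho)$ follows.

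The converse is symmetric but relies on totality rather than surjectivity. Assuming $(M',W')\models^{\HI}\SEN(\varphi)(\rho)$, fix an arbitrary $w\in|W|$; totality of $\bis$ provides some $w'\in|W'|$ with $w\bis w'$, and the hypothesis gives $(M',W')\models^{w'}\SEN(\varphi)(\rho)$, so the other direction of the biconditional in Theorem~\ref{bisinvariance} delivers $(M,W)\models^w\rho$. Since $w$ ranges over all of $|W|$, we conclude $(M,W)\models^{\HI}\rho$.

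I expect no genuine obstacle here beyond the bookkeeping of matching each hypothesis on $\bis$ to the correct implication: surjectivity is what covers every world of $W'$ (needed to establish $\models^{\HI}$ on the primed side), while totality covers every world of $W$. Both uses are essential --- without totality some $w\in|W|$ might be related to no $w'$, leaving $(M,W)\models^w\rho$ unverified, and symmetrically for surjectivity --- but once the correct property is invoked, everything reduces to a direct appeal to Theorem~\ref{bisinvariance} at the matched pair $w\bis w'$.
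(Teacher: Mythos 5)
Your proof is correct and is exactly the argument the paper intends: the corollary is stated there without proof as a ``direct consequence'' of Theorem~\ref{bisinvariance}, and your unfolding of $\models^{\HI}$ into world-wise satisfaction, using surjectivity of $\bis$ for the left-to-right direction and totality for the converse, is precisely that direct consequence spelled out. Nothing is missing, and the matching of each hypothesis on $\bis$ to the correct implication is right.
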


\begin{example}[Bisimulation in $\HPL$]\label{ex:HPL}
Let us instantiate Defn. \ref{generalbisimulation} for the $\HPL$ case
(cf. Ex. \ref{hpl-ex}), considering $\varphi= id$ and $\Sen'=\SenI$. A bisimulation $\mathrm{B}$ is such that 
 $(M,W)\mathrm{B}(M',W')$,  for any two models $(M,W),(M',W')\in
|\Mod^{\HPL}(P,\Nom,\{\lambda\})|$, if
\begin{itemize}
 \item[(i)] for any $i\in \Nom$, $w\mathrm{B} w'$, $w=W_i \text{ iff }w'=W'_i$;
 \item[(ii)] $M_w\equiv M'_{w'}$, i.e., bisimilar states satisfy
    the same sentences;
 \item[(iii)] for any $i\in \Nom$, $W_i \mathrm{B} W'_i$;
  \item[(vi)] for any $(w,w_1)\in W_\lambda$ with $w\mathrm{B}  w'$, there is
    a  $w'_1\in |W'|$ such that $w_1 \mathrm{B} w'_1$ 
      and $(w_1,w'_1)\in  W'_{\lambda}$;
  \item[(v)] for any $(w',w'_1)\in W'_\lambda$ with $w\mathrm{B}
       w'$, there is a
      $w_1\in |W|$ such that $w_1\mathrm{B} w'_1$ 
      and $(w_1,w'_1)\in  W'_{\lambda}$;. 
\end{itemize} 
Note that condition (ii) is equivalent to say that bisimilar states 
have assigned the same set of propositions (for any $p\in P$,
    $M_w(p)=\top$ iff $M'_{w'}(p)=\top$).
As expected, this definition 
corresponds exactly to standard bisimulation for propositional hybrid logic
(see, e.g. \cite[Defn 4.1.1]{vbt}).
\end{example}

The definition of bisimulation computed in the previous example, can also capture the case of propositional
modal logic: just consider pure modal
signatures (i.e., with an empty set of nominals), as condition $(i)$ is trivially
satisfied. Moreover, instantiating Theorem \ref{bisinvariance} we get the
classical result about preservation of modal truth by  bisimulation.

\begin{example}[Bisimulation for $\HEQ$]\label{ex:HEQ}
Consider now  the instantiation of \ref{generalbisimulation} for $\HEQ$
(cf. Ex~\ref{ex-heq}). All one has to do is to  replace
condition (iv) in Defn~\ref{generalbisimulation} by its instantiation for algebras:
 two algebras are
elementarily equivalent if the respective generated varieties
coincides \cite{gratzer}. 
\end{example}

\section{Refinements for generic hybridised logics}\label{sec-sim}

Let us come back to the general case of a reconfigurable system described by a set of configurations and a transition structure entailing changes from one to another. If equivalence of specifications of such systems corresponds to a notion of bisimilarity in which bisimilar configurations are enforced to be elementary equivalent, a \emph{refinement} relation corresponds to \emph{similarity}. This entails, on the one hand,  preservation (but not reflection ) of transitions, \ie, of reconfiguration steps, from the abstract to the concrete system. And, on the other hand, at each local configuration, preservation of the original properties along local refinement. Formally,

\begin{definition}\label{generalsimulation}
Let $\HI$ be the hybridisation of an institution $\I$, $\varphi\in
\SIGN(\Delta,\Delta')$ a signature morphism and $\Sen'$ a subfunctor of $\SenI$.
A \emph{$\varphi,\Sen'$-refinement of $(M,W)\in
  \MOD(\Delta)$ by $(M'W')\in \MOD(\Delta')$ } consists of a
non-emtpy relation $\simf^{\Sen'}
\subseteq |W|\times |W'|$ such that, for any $w\simf^{\Sen'} w'$,

\begin{enumerate}
  \item[(f.i)]\label{i-1} for any $i\in \Nom$, $\text{if }W_i=w  \text{ then } W'_{\varphi_\Nom(i)}=w'$.
  \item[(f.ii)]\label{i-2} $M_w \feq^{\Sen'} M'_{w'}$.
    \item[(f.iii)]
    for any $i\in \Nom$, $W_i\, \simf^{\Sen'}\,  W'_{\varphi_\Nom(i)}$.
  \item[(f.iv)]\label{i-3} For any $\lambda \in
      \Lambda_n$, if $(w,w_1,\dots,w_n)\in W_\lambda$ then for each  $k\in \{1,\dots,n\}$ there is a
      $w'_k\in |W'|$ such that $w_k \simf  w'_k$ 
      and $(w',w'_1,\dots, w'_n)\in  W'_{\varphi_\MS(\lambda)}$. 
\end{enumerate}

\end{definition}

The question is, now, to see whether (hybrid)
satisfaction is, or is not, preserved by refinement. On a first attempt, it is natural to accept a positive answer which, although intuitive, is wrong.
 Actually, 
not  all  hybrid sentences can be preserved along a refinement chain. Note on the
proof of Th~\ref{bisinvariance}, that the preservation of  hybrid satisfaction
of sentences $[\lambda](\xi_1,\dots,\xi_n)$ is entailed by 
condition  $(ii)$ of Defn~\ref{generalbisimulation}, but the latter is stated on the opposite direction to  refinement.
As a simple counter-example,
 define a $\simf$-refinement  from a
$\Delta$-hybrid model $(M,W)$ with 
$|W|=\{w\}$ and $W_\lambda=\emptyset$ for $\lambda \in \Lambda_n$ to any other
$\Delta'$-hybrid model $(M',W')$ such that
$\Mod^\HI(\varphi_\Sig)(M'_{w'})=M_w$ for some $w'\in |W'|$. Sentence
$[\lambda](\xi_1,\dots,\xi_n)$, which trivially holds in the world $w$ of
$(M,W)$, may fail to
be satisfied in the $\simf$-related world $w'$ of
$(M',W')$.  Sentences like $\neg \xi$ provide another counter-example. The reason is that, by
hypothesis,  preservation is only assumed on the refinement direction and, of course, non satisfaction in one direction, does not implies 
non satisfaction in the other. Therefore, differently from  the
bisimulations case, the preservation of the satisfaction under
refinement  does not hold
for all the hybrid sentences universe.  Actually, the `boxed'
and negated sentences are exactly the cases where it may fail. 

Finally, a note regarding parameter $\Sen'$ in condition \emph{(f.ii)}. 
First of all  note that the ``unrestricted'' implication of clause
\emph{(f.ii)} in Defn~\ref{generalsimulation} 
is very strong: it often implies the converse
implication as well. For instance, in $\HPL$, the condition holds iff
$M_w=\Mod(\varphi)(M'_{w'})$.  In particular, an $id$-refinement
implies the equality of  realizations of related worlds (since,
the implication ``$M_w\models^\PL\neg p \text{ then
}M'_{w'}\models^\PL\neg p$'' 
is equivalent to the implication ``$M'_{w'}\models^\PL p \text{ then }
M_w\models^\PL p$''. 
Hence, $M_w=M'_{w'}$). 
It seems reasonable to  weaken this condition to yield a  strict inclusion. One way to do this is to restrict the focus to a subset of the sentences in the base institution. In the 
example mentioned above this will correspond to exclude  $\PL$
negations, which amounts to take as  $\Sen' (Prop)$ the set of
propositional sentences without negations.

Given an institution
$\I=(\SignI,\SenI,\ModI,(\models_\Sigma)_{\Sigma\in |\SignI|})$ and a
sentences subfunctor $Sen'\subseteq \SenI$,  we denote by
$\HI'$ the hybridisation of the institution
$\I'=(\SignI,\Sen',\ModI,(\models_\Sigma)_{\Sigma\in |\SignI|})$.

\begin{definition}[$\Sen'$-Positive Existencial sentences]
The $\Sen'$-positive existencial sentences of a signature $\Delta\in |\SIGN|$
are given by a subfunctor $\SEN_+\subseteq \Sen^{\HI'}$ defined
inductively for each signature $\Delta$  as $\Sen^{\HI'}(\Delta)$ but excluding
both negations and box modalities. For each signature morphism
$\varphi:\Delta\rightarrow \Delta'$, $\SEN_+(\varphi)$ is the
restriction of $\Sen^{\HI'}(\varphi)$ to $\SEN_+(\Delta)$.
\end{definition}

\begin{theorem}\label{siminvariance}
Let $\HI$ be the quantifier free hybridisation of an institution
$\I$,  $\Sen'$ a subfunctor of  $\SenI$, $\varphi\in
\SIGN(\Delta,\Delta')$ a signature morphism,
$\simf^{\Sen'}$ a $\varphi,\Sen'$-refinement relation  and $(M,W)\in
  \MOD(\Delta)$ and $(M',W')\in \MOD(\Delta')$ two mo\-dels such that $(M',W')$ is a refinement of $(M,W)$ witnessed by relation $\simf^{\Sen'}$.
 Then, for any $w \simf^{\Sen'} w'$  and  $\rho \in \SEN_+(\Delta)$,
\[
  (M,W)\models^w \rho \text{ implies that }   (M',W')\models^{w'} \SEN(\varphi)(\rho).
\]
\end{theorem}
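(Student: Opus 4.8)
\noindent The plan is to argue by induction on the structure of $\rho\in\SEN_+(\Delta)$, reusing the skeleton of the proof of Theorem~\ref{bisinvariance} with two systematic adjustments forced by the asymmetry of refinement. First, every biconditional in that proof is replaced by the single implication ``$\Rightarrow$''; the clauses (f.i)--(f.iv) of Defn~\ref{generalsimulation} are exactly the ``forth'' halves of the bisimulation conditions, so each step reuses only the implication that was already available there. Second, the two inductive cases absent from $\SEN_+$, namely negation and box modalities, are simply omitted, so the connectives to treat are the atoms, the genuinely positive Boolean connectives $\vee$ and $\wedge$, the diamond $\langle\lambda\rangle$, and the satisfaction operator $@_i$.

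For the atoms: if $\rho=i\in\Nom$ and $(M,W)\models^w i$ then $W_i=w$, and clause (f.i) gives $W'_{\varphi_\Nom(i)}=w'$, whence $(M',W')\models^{w'}\varphi_\Nom(i)=\SEN(\varphi)(i)$. If $\rho\in\Sen'(\Sigma)$ and $(M,W)\models^w\rho$ then $M_w\models^\I\rho$; clause (f.ii) asserts $M_w\feq^{\Sen'}M'_{w'}$, which is precisely the forward half of Cor~\ref{corelq}, so $M'_{w'}\models\SenI(\varphi_\Sig)(\rho)$ and therefore $(M',W')\models^{w'}\SEN(\varphi)(\rho)$. The cases $\rho=\xi\vee\xi'$ and $\rho=\xi\wedge\xi'$ are immediate from the induction hypothesis applied to the subsentences at the same pair $(w,w')$. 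For $\rho=@_i\xi$ I relocate evaluation to $W_i$, use clause (f.iii) to obtain $W_i\simf^{\Sen'}W'_{\varphi_\Nom(i)}$, and then apply the induction hypothesis to $\xi$ at that pair, obtaining $(M',W')\models^{W'_{\varphi_\Nom(i)}}\SEN(\varphi)(\xi)$ and hence $(M',W')\models^{w'}\SEN(\varphi)(@_i\xi)$.

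The crux is the diamond case $\rho=\langle\lambda\rangle(\xi_1,\dots,\xi_n)$, and it is here that clause (f.iv) --- the single modal requirement imposed by a refinement --- does the work. Assuming $(M,W)\models^w\rho$, there is a tuple $(w,w_1,\dots,w_n)\in W_\lambda$ with $(M,W)\models^{w_k}\xi_k$ for each $k$. By (f.iv) there are witnesses $w'_1,\dots,w'_n$ with $w_k\simf^{\Sen'}w'_k$ and $(w',w'_1,\dots,w'_n)\in W'_{\varphi_\MS(\lambda)}$; the induction hypothesis at each pair $(w_k,w'_k)$ then yields $(M',W')\models^{w'_k}\SEN(\varphi)(\xi_k)$. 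Assembling these gives $(M',W')\models^{w'}\langle\varphi_\MS(\lambda)\rangle(\SEN(\varphi)(\xi_1),\dots,\SEN(\varphi)(\xi_n))$, which is exactly $\SEN(\varphi)(\rho)$.

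I expect the genuine difficulty to be conceptual rather than computational: seeing clearly why the restriction to $\SEN_+$ cannot be relaxed. A box sentence would require starting from an arbitrary successor tuple in $W'$ and pulling it back to a successor tuple in $W$ --- that is the ``back'' clause (v) of Defn~\ref{generalbisimulation}, which a refinement deliberately drops; and a negation $\neg\xi$ would demand the converse of the induction hypothesis, transferring \emph{non}-satisfaction of $\xi$ to non-satisfaction of $\SEN(\varphi)(\xi)$, which the one-directional clauses cannot supply (the same failure afflicts the antecedent of an implication, which hides a negation). These are precisely the counter-examples sketched just before the theorem, so the positive-existential fragment is exactly what lets the induction close, and beyond the routine modal bookkeeping no further estimate is needed.
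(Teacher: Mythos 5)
Your proof is correct and takes essentially the same route as the paper's: a structural induction on $\SEN_+(\Delta)$ that reuses the proof of Theorem~\ref{bisinvariance} in the single available direction, with the diamond case carried by the modal forth clause of Defn~\ref{generalsimulation}. Your write-up is in fact slightly more careful than the paper's own, which leaves every case except the diamond implicit and cites clause (f.iii) in that case where (f.iv) is plainly the one doing the work.
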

\begin{proof}
The proof is  by induction on the structure of the existential
positive sentences and comes directly from the proof of
Th~\ref{bisinvariance}, taking the right to left implication.
 Preservation of base sentences follows exactly the
same proof since the IH is precisely about the $\Sen'$ sentences.
 What remains to be proved is the case
$\rho=\langle \lambda \rangle(\xi_1,\dots,\xi_n)$. Thus,
	\begin{eqnarray*}
	& & (M, W)\models^\w \langle \lambda\rangle (\xi_1~,\dots, \xi_n)
	\just\Leftrightarrow{ defn. of $\models^w$}
        \text{there exists}\;  (\w,\w_1,\dots,\w_n)\in W_\lambda \\ & & \text{ such that }
        (M,W)\models^{\w_k}\xi_k  \text{ for any } k\in \{1,\dots,n\} 
        \just\Rightarrow{By (f.iii), we have $w_k\simf
          w'_k$ for any $k\in \{1,\dots,n\} $ + I.H. }
        \text{there exists}\;  (\w',\w'_1,\dots,\w'_n)\in W'_{\varphi_{\MS}(\lambda)} \\ & & \text{ such that }
        (M',W')\models^{\w'_k}\xi_k  \text{ for any } k\in \{1,\dots,n\}         %
        \just\Leftrightarrow{ defn. of $\models^{w'}$}
        (M',W')\models^{\w'}
        \langle \varphi_{\MS}(\lambda)\rangle(\SEN(\varphi)(\xi_1),\dots,
        \SEN(\varphi)(\xi_n))  
        \just\Leftrightarrow{ defn. of $\SEN(\varphi)$}
         (M',W')\models^{w'} \SEN(\varphi)(\langle \lambda\rangle(\xi_1,\dots, \xi_n)) 
        \end{eqnarray*}
\end{proof}

\begin{corollary}\label{siminvariance1}
In the conditions of Th~\ref{siminvariance}, for any $\rho \in
\SEN_+(\Delta)$, if $\simf$ is surjective, then
    \[
  (M,W)\models \rho \text{ implies that }   (M',W')\models \SEN(\varphi)(\rho).
\]
\end{corollary}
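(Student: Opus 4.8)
The plan is to prove Corollary~\ref{siminvariance1} directly from Theorem~\ref{siminvariance}, unfolding the definition of global satisfaction $\models$ and exploiting surjectivity of $\simf$ to transfer satisfaction from every world of $(M,W)$ to every world of $(M',W')$. Recall that $(M,W)\models\rho$ means $(M,W)\models^w\rho$ for \emph{all} $w\in|W|$, and similarly $(M',W')\models\SEN(\varphi)(\rho)$ means $(M',W')\models^{w'}\SEN(\varphi)(\rho)$ for all $w'\in|W'|$. So the goal, after expanding definitions, is: assuming $(M,W)\models^w\rho$ for every $w\in|W|$, show $(M',W')\models^{w'}\SEN(\varphi)(\rho)$ for every $w'\in|W'|$.

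First I would fix an arbitrary $w'\in|W'|$. By surjectivity of the refinement relation $\simf^{\Sen'}$, there exists some $w\in|W|$ with $w\simf^{\Sen'} w'$. By the global hypothesis $(M,W)\models\rho$ we have in particular $(M,W)\models^w\rho$. Now, since $\rho\in\SEN_+(\Delta)$ and $w\simf^{\Sen'} w'$, Theorem~\ref{siminvariance} applies to this pair of worlds and yields the local implication
\[
(M,W)\models^w\rho \text{ implies } (M',W')\models^{w'}\SEN(\varphi)(\rho),
\]
so we conclude $(M',W')\models^{w'}\SEN(\varphi)(\rho)$. As $w'$ was arbitrary, this establishes $(M',W')\models\SEN(\varphi)(\rho)$, which is exactly the desired conclusion.

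The only genuine work is the surjectivity step: it guarantees that \emph{every} target world $w'$ has at least one $\simf^{\Sen'}$-predecessor $w$, which is precisely what is needed to invoke the world-local Theorem~\ref{siminvariance} at $w'$. Without surjectivity there could be a world $w'\in|W'|$ not related to any $w$, and nothing would force $(M',W')\models^{w'}\SEN(\varphi)(\rho)$; this is why the hypothesis is indispensable rather than cosmetic. I do not expect any real obstacle here, as the argument is a routine universal-quantifier manipulation: the substantive content (in particular the inductive preservation of existential and base sentences) has already been discharged in Theorem~\ref{siminvariance}, and this corollary merely lifts the per-world statement to the global satisfaction relation using surjectivity to cover all worlds of the refined model.
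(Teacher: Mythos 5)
Your proof is correct and matches the intended argument: the paper states this corollary without an explicit proof precisely because it follows from Theorem~\ref{siminvariance} by the routine step you give---using surjectivity of $\simf$ to supply, for each $w'\in|W'|$, a related $w\in|W|$ at which the global hypothesis and the world-local theorem can be invoked. Your remark on why surjectivity is indispensable (unrelated worlds $w'$ would otherwise be unconstrained) also correctly explains the hypothesis, mirroring the role of totality and surjectivity in Corollary~\ref{bisinvariance1}.
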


\noindent
The following examples illustrate refinement situations in this setting.

\begin{example}[Refinement in $\HMVL_L$] 
Figure \ref{fig1} illustrates an example of a $\Sen'$-refinement in
$\HMVL_{L_4}$, for $L_4$ represented in Figure \ref{fig1}. Consider $\Sen'\subseteq \SenI$ restricting the base sentences 
to propositions, i.e.,  $\Sen'(LProp)=\{(p,l)|p\in LProp \text{ and }
l\in L_4\}.$
\begin{figure}[h]\label{fig1}
\begin{center}
\includegraphics[width=0.7\linewidth]{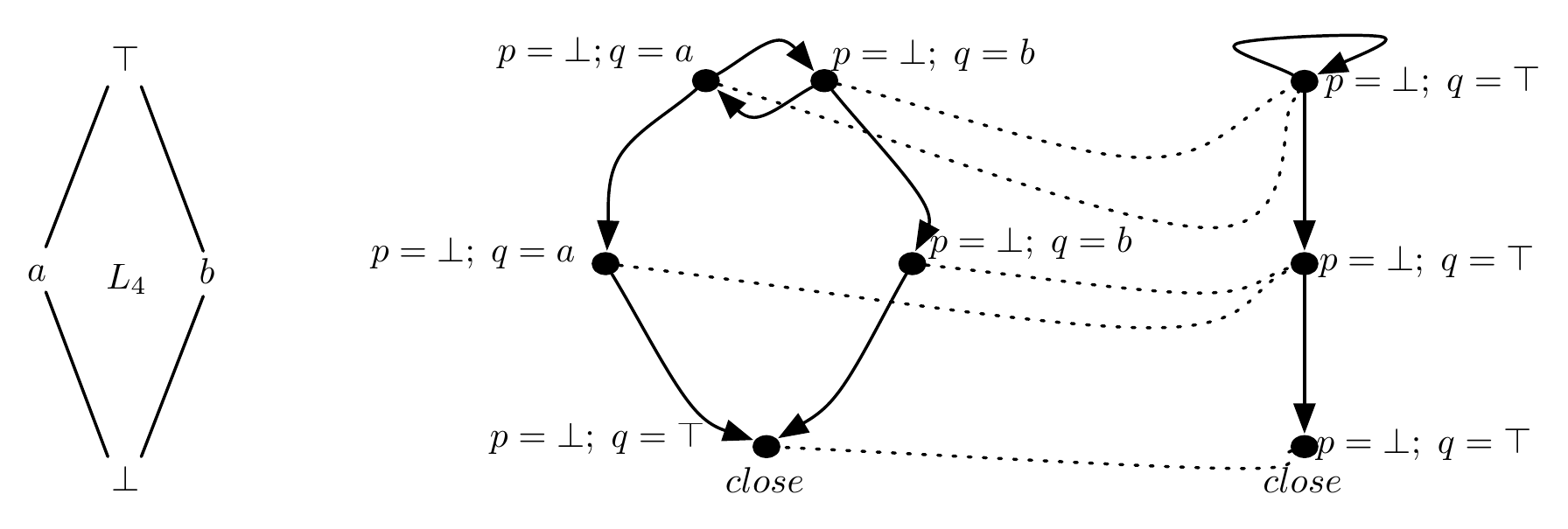}
\end{center}
\caption{Refinement  in $\HMVL_L$}
\end{figure}
Conditions (f.i) and (f.iii) are obviously satisfied. In what concerns the
verification of condition (f.ii) for which 
$(p,l)\in \Sen'(LProp)$, $M_w\models_{LProp}^{\MVL_{L_4}} (p,l)\Rightarrow M'_{w'}\models_{LProp}^{\MVL_{L_4}}
(p,l)$, it is sufficient to be that, $(M_w\models p) \leq
(M'_{w'}\models p)$, $p\in LProp$.

\end{example}
\begin{example}[Refinement in $\HEQ$]
Consider a store system abstractly modelled as the initial algebra $A$
of the $((S,F),\Gamma)$  where $S=\{mem,
elem\}$, $F_{mem\times elem\rightarrow mem}=\{write\}$,
$F_{mem\rightarrow mem}=\{del\}$ and $F_{\arity\rightarrow
  s}=\emptyset$ otherwise and  $\Gamma=\{del(write(m,e))=m\}$.
Suppose one intends to refine this structure into a $read$ function
configurable in two different modes: in one of them it reads the first element in the store, in the other the last.
 Reconfiguration between the two execution modes is enforced by an external event $shift$. Note that the
abstract model can be seen as the $\big((S,F),\emptyset,\{shift\}\big)$-hybrid model $\mathcal{M}=(M,W)$, taking $|W|=\{\star\}$,
$W_{shift}=\emptyset$ and $M_\star=A$. Then, we take the 
inclusion morphism $\varphi_\Sig:(S,F)\hookrightarrow (S,F')$ where
$F'$ extends $F$ with $F_{mem\rightarrow elem}={read}$ and
$F_{mem}=\{empty\}$. For the envisaged refinement let us consider the model
$\mathcal{M}'=(M',W')$ where $W'=\{s_1,s_2\}$ and
$W'_{shift}=\{(s_1,s_2),(s_2,s_1)\}$ and where $M_{s_1}$ and $M_{s_2}$
are the initial algebras of the equations presented in Figure
\ref{ex2}. 
\begin{figure}[h]\label{ex2}
\begin{center}
\includegraphics[width=0.7\linewidth]{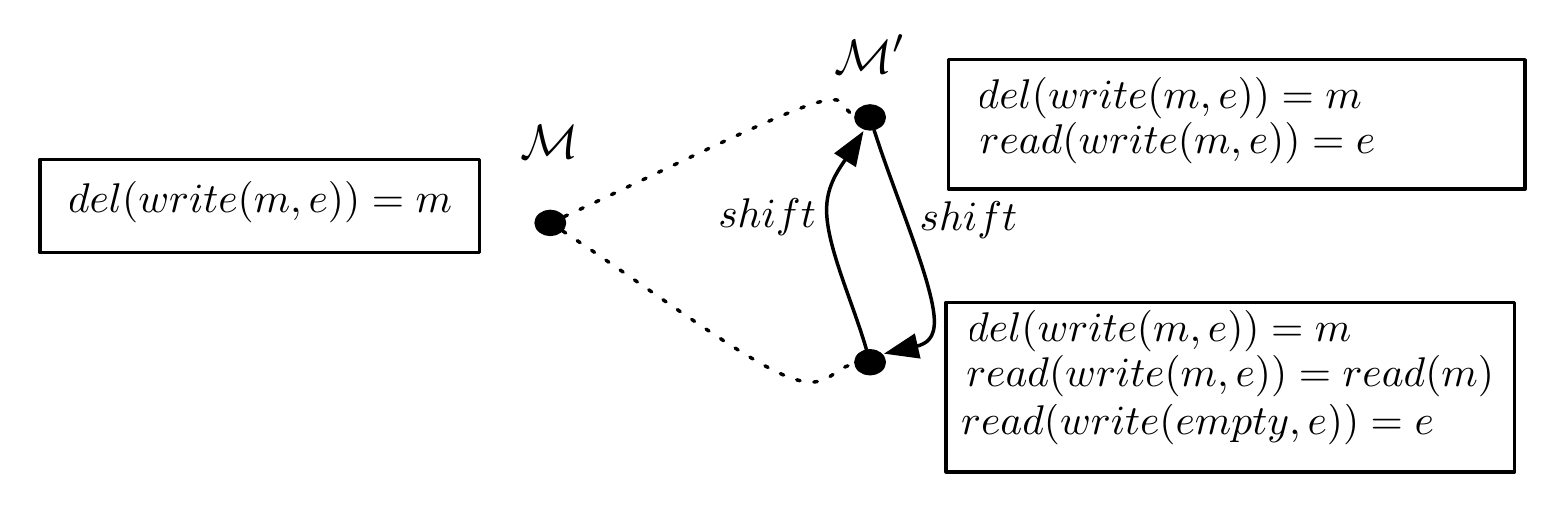}
\end{center}
\caption{Refinement  in $\HEQ$}
\end{figure}
It is not difficult to see that $R=\{(\star,s_1),(\star,s_2)\}$ is a
$\varphi$-refinement relation:  conditions (f.i) and (f.iii) are
trivially fulfilled and,  condition (f.ii) is a direct
consequence of properties representability of the initial models.
\end{example}

\section{Conclusions}

The paper introduced  notions of equivalence and refinement between models of hybridised logic specifications, i.e. specifications formalised in hybridised versions of  base logics used to describe  a systems' possible configurations.  The definition is parametric on precisely the base logic relevant for each application. 
Current work on this topic includes research on a full equivalence theorem, showing, in particular, in which cases $\HI$ logical equivalence entails bisimilarity. Another topic concerns   the study of typical constructions on Kripke structures (e.g. bounded morphism images, substructures and disjoint unions) and their characterisation under bisimilarity and refinement.

\subsection*{Acknowledgements}
Work funded by the ERDF through the Programme COMPETE and the Portuguese Government through FCT - Foundation for Science and Technology, under
contract \texttt{FCOMP-01-0124-FEDER-028923},
 \emph{Centro de Investiga\c{c}\~{a}o e Desenvolvimento em Matem\'atica e Aplica\c{c}\~{o}es} of Universidade de Aveiro,  and
   doctoral grant \texttt{SFRH/BDE/33650/2009} supported by FCT and
 \emph{Critical Software S.A., Portugal}.


\begin{thebibliography}{ACEGG90}
\providecommand{\bibitemdeclare}[2]{}
\providecommand{\bibitemend}{}
\providecommand{\bibliographystart}{}
\providecommand{\bibliographyend}{}
\providecommand{\urlprefix}{Available at }
\providecommand{\url}[1]{\texttt{#1}}
\providecommand{\href}[2]{\texttt{#2}}
\providecommand{\urlalt}[2]{\href{#1}{#2}}
\providecommand{\doi}[1]{doi:\urlalt{http://dx.doi.org/#1}{#1}}
\providecommand{\bibinfo}[2]{#2}
\bibliographystart

\bibitemdeclare{inproceedings}{Agusti-CullellEGG90}
\bibitem[ACEGG90]{Agusti-CullellEGG90}
\bibinfo{author}{Jaume Agusto-Cullell}, \bibinfo{author}{Francesc Esteva},
  \bibinfo{author}{Pere Garcia} \& \bibinfo{author}{Lluis Godo}
  (\bibinfo{year}{1990}): \emph{\bibinfo{title}{Formalizing Multiple-Valued
  Logics as Institutions}}.
\newblock In \bibinfo{editor}{B.~Bouchon-Meunier}, \bibinfo{editor}{R.~Yager}
  \& \bibinfo{editor}{L.~A. Zadeh}, editors: {\sl
  \bibinfo{booktitle}{Uncertainty in Knowledge Bases, IPMU 90}},
  \bibinfo{series}{Lect Notes in Computer Science (512)},
  \bibinfo{publisher}{Springer}, pp. \bibinfo{pages}{269--278},
  \doi{10.1007/BFb0028112}.
\bibitemend

\bibitemdeclare{incollection}{hiding}
\bibitem[BD94]{hiding}
\bibinfo{author}{Rod Burstall} \& \bibinfo{author}{Razvan Diaconescu}
  (\bibinfo{year}{1994}): \emph{\bibinfo{title}{Hiding and behaviour: an
  institutional approach}}.
\newblock In \bibinfo{editor}{W.~Roscoe}, editor: {\sl \bibinfo{booktitle}{A
  Classical Mind: Essays in Honour of C.A.R. Hoare}},
  \bibinfo{publisher}{Prentice-Hall}, pp. \bibinfo{pages}{75--92}.
\bibitemend

\bibitemdeclare{article}{DBLP:journals/jlp/BidoitH06}
\bibitem[BH06]{DBLP:journals/jlp/BidoitH06}
\bibinfo{author}{Michel Bidoit} \& \bibinfo{author}{Rolf Hennicker}
  (\bibinfo{year}{2006}): \emph{\bibinfo{title}{Constructor-based observational
  logic}}.
\newblock {\sl \bibinfo{journal}{J. Log. Algebr. Program.}}
  \bibinfo{volume}{67}(\bibinfo{number}{1-2}), pp. \bibinfo{pages}{3--51},
  \doi{10.1016/j.jlap.2005.09.002}.
\bibitemend

\bibitemdeclare{inproceedings}{DBLP:conf/wcii/BeierleK02}
\bibitem[BKI02]{DBLP:conf/wcii/BeierleK02}
\bibinfo{author}{Christoph Beierle} \& \bibinfo{author}{Gabriele Kern-Isberner}
  (\bibinfo{year}{2002}): \emph{\bibinfo{title}{Looking at Probabilistic
  Conditionals from an Institutional Point of View}}.
\newblock In: {\sl \bibinfo{booktitle}{WCII}}, pp. \bibinfo{pages}{162--179},
  \doi{10.1007/11408017\_10}.
\bibitemend

\bibitemdeclare{book}{livro_brauner}
\bibitem[Bra10]{livro_brauner}
\bibinfo{author}{Torben Brauner} (\bibinfo{year}{2010}):
  \emph{\bibinfo{title}{Hybrid Logic and its Proof-Theory}}.
\newblock \bibinfo{series}{Applied Logic Series},
  \bibinfo{publisher}{Springer}.
\bibitemend

\bibitemdeclare{book}{ASM03}
\bibitem[BS03]{ASM03}
\bibinfo{author}{Egon B{\"o}rger} \& \bibinfo{author}{Robert~F. St{\"a}rk}
  (\bibinfo{year}{2003}): \emph{\bibinfo{title}{Abstract State Machines. A
  Method for High-Level System Design and Analysis}}.
\newblock \bibinfo{publisher}{Springer}, \doi{10.1007/978-3-642-18216-7}.
\bibitemend

\bibitemdeclare{article}{DBLP:journals/jlp/Cirstea06}
\bibitem[C\^06]{DBLP:journals/jlp/Cirstea06}
\bibinfo{author}{Corina C\^{\i }rstea} (\bibinfo{year}{2006}):
  \emph{\bibinfo{title}{An institution of modal logics for coalgebras}}.
\newblock {\sl \bibinfo{journal}{J. Log. Algebr. Program.}}
  \bibinfo{volume}{67}(\bibinfo{number}{1-2}), pp. \bibinfo{pages}{87--113},
  \doi{10.1016/j.jlap.2005.09.004}.
\bibitemend

\bibitemdeclare{phdthesis}{vbt}
\bibitem[Cat05]{vbt}
\bibinfo{author}{Balder~David ten Cate} (\bibinfo{year}{2005}):
  \emph{\bibinfo{title}{Model Theory for Extended Modal Languages}}.
\newblock \bibinfo{type}{Ph.d. thesis}, \bibinfo{school}{Institute for Logic,
  Language and Computation Universiteit van Amsterdam}.
\bibitemend

\bibitemdeclare{inproceedings}{DBLP:conf/birthday/CaleiroMSS06}
\bibitem[CMSS06]{DBLP:conf/birthday/CaleiroMSS06}
\bibinfo{author}{Carlos Caleiro}, \bibinfo{author}{Paulo Mateus},
  \bibinfo{author}{Am\'{\i }lcar Sernadas} \& \bibinfo{author}{Cristina
  Sernadas} (\bibinfo{year}{2006}): \emph{\bibinfo{title}{Quantum
  Institutions}}.
\newblock In \bibinfo{editor}{K.~Futatsugi}, \bibinfo{editor}{J.-P. Jouannaud}
  \& \bibinfo{editor}{J.~Meseguer}, editors: {\sl \bibinfo{booktitle}{Essays
  Dedicated to Joseph A. Goguen}}, \bibinfo{series}{Lecture Notes in Computer
  Science (4060)}, pp. \bibinfo{pages}{50--64}, \doi{10.1007/11780274\_4}.
\bibitemend

\bibitemdeclare{book}{livrodiaconescu}
\bibitem[Dia08]{livrodiaconescu}
\bibinfo{author}{Razvan Diaconescu} (\bibinfo{year}{2008}):
  \emph{\bibinfo{title}{Institution-independent Model Theory}}.
\newblock \bibinfo{publisher}{Birkhauser Basel}.
\bibitemend

\bibitemdeclare{article}{DBLP:journals/mlq/Diaconescu11}
\bibitem[Dia11]{DBLP:journals/mlq/Diaconescu11}
\bibinfo{author}{Razvan Diaconescu} (\bibinfo{year}{2011}):
  \emph{\bibinfo{title}{On quasi-varieties of multiple valued logic models}}.
\newblock {\sl \bibinfo{journal}{Math. Log. Q.}}
  \bibinfo{volume}{57}(\bibinfo{number}{2}), \doi{10.1002/malq.200910131}.
\bibitemend

\bibitemdeclare{unpublished}{paperdiaco}
\bibitem[DM]{paperdiaco}
\bibinfo{author}{Razvan Diaconescu} \& \bibinfo{author}{Alexandre Madeira}:
  \emph{\bibinfo{title}{Encoding Hybridized Institutions into First Order
  Logic}}.
\newblock \bibinfo{note}{Submited to a journal}.
\bibitemend

\bibitemdeclare{article}{ins}
\bibitem[GB92]{ins}
\bibinfo{author}{Joseph~A. Goguen} \& \bibinfo{author}{Rod~M. Burstall}
  (\bibinfo{year}{1992}): \emph{\bibinfo{title}{Institutions: Abstract Model
  Theory for Specification and Programming}}.
\newblock {\sl \bibinfo{journal}{J. ACM}}
  \bibinfo{volume}{39}(\bibinfo{number}{1}), \doi{10.1145/147508.147524}.
\bibitemend

\bibitemdeclare{book}{Gottwald01atreatise}
\bibitem[Got01]{Gottwald01atreatise}
\bibinfo{author}{Siegfried Gottwald} (\bibinfo{year}{2001}):
  \emph{\bibinfo{title}{A Treatise on Many-Valued Logics}}.
\newblock \bibinfo{series}{Studies in Logic and Computation, vol 9},
  \bibinfo{publisher}{Research Studies Press: Baldock, Hertfordshire, England}.
\bibitemend

\bibitemdeclare{book}{gratzer}
\bibitem[Gr{\"{a}}79]{gratzer}
\bibinfo{author}{G.~Gr{\"{a}}tzer} (\bibinfo{year}{1979}):
  \emph{\bibinfo{title}{Universal Algebra (2nd ed)}}.
\newblock \bibinfo{publisher}{Springer-Verlag}.
\bibitemend

\bibitemdeclare{book}{Hodges:1997:SMT:262326}
\bibitem[Hod97]{Hodges:1997:SMT:262326}
\bibinfo{author}{Wilfrid Hodges} (\bibinfo{year}{1997}):
  \emph{\bibinfo{title}{A shorter model theory}}.
\newblock \bibinfo{publisher}{Cambridge University Press},
  \bibinfo{address}{New York, NY, USA}.
\bibitemend

\bibitemdeclare{book}{MacLaneS:catwm}
\bibitem[Lan71]{MacLaneS:catwm}
\bibinfo{author}{Saunders~Mac Lane} (\bibinfo{year}{1971}):
  \emph{\bibinfo{title}{Categories for the Working Mathematician}}.
\newblock {\sl \bibinfo{series}{Graduate Texts in
  Mathematics}}~\bibinfo{volume}{5}, \bibinfo{publisher}{Springer-Verlag},
  \doi{10.1007/978-1-4612-9839-7}.
\bibitemend

\bibitemdeclare{phdthesis}{madeirathesis}
\bibitem[Madar]{madeirathesis}
\bibinfo{author}{Alexandre Madeira} (\bibinfo{year}{2013 (to appear)}):
  \emph{\bibinfo{title}{Foundations and techniques for software
  reconfigurability}}.
\newblock \bibinfo{type}{Ph.d. thesis}, \bibinfo{school}{Minho and Aveiro
  Universities (Joint MAP-i Doctoral Programme)}.
\bibitemend

\bibitemdeclare{inproceedings}{sefm11}
\bibitem[MFMB11]{sefm11}
\bibinfo{author}{Alexandre Madeira}, \bibinfo{author}{Jos{\'e}~M. Faria},
  \bibinfo{author}{Manuel~A. Martins} \& \bibinfo{author}{Lu\'{\i }s~Soares
  Barbosa} (\bibinfo{year}{2011}): \emph{\bibinfo{title}{Hybrid Specification
  of Reactive Systems: An Institutional Approach}}.
\newblock In \bibinfo{editor}{G.~Barthe}, \bibinfo{editor}{A.~Pardo} \&
  \bibinfo{editor}{G.~Schneider}, editors: {\sl \bibinfo{booktitle}{Proc. 9th
  Inter. Conf. Software Engineering and Formal Methods (SEFM 2011)}},
  \bibinfo{series}{Lect Notes in Computer Science (7041)},
  \bibinfo{publisher}{Springer}, pp. \bibinfo{pages}{269--285},
  \doi{10.1007/978-3-642-24690-6\_19}.
\bibitemend

\bibitemdeclare{inproceedings}{calco}
\bibitem[MMDB11]{calco}
\bibinfo{author}{Manuel~A. Martins}, \bibinfo{author}{Alexandre Madeira},
  \bibinfo{author}{Razvan Diaconescu} \& \bibinfo{author}{Lu\'{\i }s~Soares
  Barbosa} (\bibinfo{year}{2011}): \emph{\bibinfo{title}{Hybridization of
  Institutions}}.
\newblock In \bibinfo{editor}{A.~Corradini}, \bibinfo{editor}{B.~Klin} \&
  \bibinfo{editor}{C.~C\^{\i }rstea}, editors: {\sl \bibinfo{booktitle}{Proc.
  4th International Conf. on Algebra and Coalgebra in Computer Science (CALCO
  2011)}}, \bibinfo{series}{Lect Notes in Computer Science (6859)},
  \bibinfo{publisher}{Springer}, pp. \bibinfo{pages}{283--297},
  \doi{10.1007/978-3-642-22944-2\_20}.
\bibitemend

\bibitemdeclare{inproceedings}{Mossakowski:2006:SCP:1763794.1763801}
\bibitem[MR07]{Mossakowski:2006:SCP:1763794.1763801}
\bibinfo{author}{Till Mossakowski} \& \bibinfo{author}{Markus Roggenbach}
  (\bibinfo{year}{2007}): \emph{\bibinfo{title}{Structured CSP: a process
  algebra as an institution}}.
\newblock In: {\sl \bibinfo{booktitle}{Proc. 18th Inter. Conf. on Recent trends
  in algebraic development techniques}}, \bibinfo{series}{WADT'06},
  \bibinfo{publisher}{Springer-Verlag}, \bibinfo{address}{Berlin, Heidelberg},
  pp. \bibinfo{pages}{92--110}, \doi{10.1007/978-3-540-71998-4\_6}.
\bibitemend

\bibitemdeclare{incollection}{Par81}
\bibitem[Par81]{Par81}
\bibinfo{author}{D.~Park} (\bibinfo{year}{1981}):
  \emph{\bibinfo{title}{Concurrency and Automata on Infinite Sequences}}.
\newblock In \bibinfo{editor}{P.~Deussen}, editor: {\sl
  \bibinfo{booktitle}{Proc. Conf. on Theoretical Computer Science}},
  \bibinfo{publisher}{Springer Lect. Notes Comp. Sci. (104)}, pp.
  \bibinfo{pages}{167--183}, \doi{10.1007/BFb0017309}.
\bibitemend

\bibitemdeclare{article}{DBLP:journals/tcs/SchroderM09}
\bibitem[SM09]{DBLP:journals/tcs/SchroderM09}
\bibinfo{author}{Lutz Schr{\"o}der} \& \bibinfo{author}{Till Mossakowski}
  (\bibinfo{year}{2009}): \emph{\bibinfo{title}{HasCasl: Integrated
  higher-order specification and program development}}.
\newblock {\sl \bibinfo{journal}{Theor. Comput. Sci.}}
  \bibinfo{volume}{410}(\bibinfo{number}{12-13}),
  \doi{10.1016/j.tcs.2008.11.020}.
\bibitemend

\bibitemdeclare{book}{livro_sannella}
\bibitem[ST12]{livro_sannella}
\bibinfo{author}{Donald Sannella} \& \bibinfo{author}{Andrzej Tarlecki}
  (\bibinfo{year}{2012}): \emph{\bibinfo{title}{Foundations of Algebraic
  Specification and Formal Software Development}}.
\newblock \bibinfo{series}{EATCS Monographs on theoretical computer science},
  \bibinfo{publisher}{Springer}, \doi{10.1007/978-3-642-17336-3}.
\bibitemend

\bibliographyend
\end{thebibliography}

\end{document}